\journal{Computers \& Mathematics with Applications}
\newcommand{\R}{\mathbb{R}}
\newcommand{\Q}{\mathcal{Q}}
\newcommand{\E}{\mathbb{E}}
\newcommand{\rd}{{\rm d}}
\theoremstyle{definition}
\newtheorem{definition}{Definition}[section]
\theoremstyle{corollary}
\theoremstyle{remark}
\theoremstyle{proposition}
\newtheorem{proposition}{Proposition}[section]
\begin{document}

\begin{frontmatter}

\title{PDEs for pricing interest rate derivatives under the new generalized Forward Market Model (FMM)}

%
%
%

\author[auth1]{J. G. L\'opez-Salas}\ead{jose.lsalas@udc.es}
\address[auth1]{Department of Mathematics, Faculty of Informatics and CITIC, Campus Elvi\~na s/n, 15071-A Coru\~na (Spain)}

\author[auth2]{S. P\'erez-Rodr\'iguez}\ead{sperezr@ull.edu.es}
\address[auth2]{Departamento de An\'alisis Matem\'atico and IMAULL, Universidad de La Laguna, 38208, La Laguna, Tenerife, Canary Islands (Spain)}

\author[auth1]{C. V\'azquez}\ead{carlosv@udc.es}

\begin{abstract}
In this article we derive partial differential equations (PDEs) for pricing interest rate derivatives under the generalized Forward Market Model (FMM) recently pre\-sen\-ted by A. Lyashenko and F. Mercurio in \cite{lyashenkoMercurio:Mar2019} to model the dynamics of the Risk Free Rates (RFRs) that are replacing the traditional IBOR rates in the financial industry. Moreover, for the numerical solution of the proposed PDEs formulation, we develop some adaptations of the finite differences methods developed in \cite{LopezPerezVazquez:sisc} that are very suitable to treat the presence of spatial mixed derivatives. This work is the first article in the literature where PDE methods are used to value RFR derivatives. Additionally, Monte Carlo-based methods will be designed and the results are compared with those obtained by the numerical solution of PDEs.
\end{abstract}

\begin{keyword}
IBOR replacement, generalized forward market model, forward rates, finite differences, AMFR-W methods.
\MSC[2020] 91G30, 91G60, 65M06, 65M22
\end{keyword}

\end{frontmatter}


\section{Introduction}

For decades, financial institutions have been using InterBank Offered Rates (IBORs) as reference rates to determine interest or as underlyings of derivatives in several currencies, perhaps the more popular example being the London InterBank Offered Rate (LIBOR). More than 350 trillion dollars in derivatives and other financial products are tied to these rates. 

IBORs reflect the average unsecured short-term interest rate at which large global banks can borrow from each other. They are based on surveys to banks and not on real transactions.

At the beginning of the 21st century, several big banks manipulated the interest rate they reported that they could borrow at. Firstly, to allow their traders, who were taking derivative bets on where this IBOR would be, to make more money because the rate was artificially distorted. Later, in the depth of the 2008 financial crisis, banks again manipulated IBORs, this time not to make profits from derivatives trading, but to make themselves look financially stronger than they were \cite{FabriziHuanParbonetti:2021,HuanPrevitsParbonetti:2022,McConnell,Gandhi}.

In view of previous IBORs scandals, a few years ago financial authorities worldwide initiated the replacement of IBORs with alternative Risk Free Rates (RFRs). RFRs are reported to be robust because they rely on real transactions. The whole banking industry is adapting its products by offering them based on RFR for new trades (see \cite{Duffie:2018, HuangTodorov:2022}, for example). This transition, known as IBOR transition, is complex for clients, dealers, and financial authorities. 

In fact, on December 31st, 2021, some IBORs ceased to be published and became non-representative. This concerned all tenors of IBOR Japanese Yen, British Pounds, Swiss Francs, Euros, and two tenors of IBOR USD dollars. Recently, on June 30th, 2023, the remaining tenors of IBOR USD dollars also stopped being reported and became irrelevant. For the time being, all major economies have selected RFRs to replace their corresponding IBORs. For example, the United States of America adopted SOFR (Secured Overnight Financing Rate), the European Union selected ESTER (Euro Short-TErm Rate), the United Kingdom designed SONIA (Sterling Over\-Night Index Average), Switzerland took SARON (Swiss Average Rate OverNight), and Japan selected TONAR or TONA (Tokyo OverNight Average Rate).

The main general features of RFRs are the following. Firstly, RFRs are overnight rates and not term rates like IBORs (i.e. one week, one month, three months, ...). Secondly, by definition, RFRs are \textit{backward-looking}, which means that the rate to be paid for the application period is calculated by reference to historical transaction data and set at the end of that time interval. Unlike RFRs, IBORs are \textit{forward-looking} rates, meaning that the rate to be paid for the application period is set at the beginning of that time interval. Additionally, RFRs are risk-free since one-day credit risk can be neglected. On top of that, RFRs not only represent the interbank market; in fact they are rates for the entire market.

The LIBOR Market Model (LMM) was widely used by financial ins\-ti\-tu\-tions for the valuation of interest rate derivatives based on IBORs (see the seminal works \cite{BraceGatarekMusiela,Jamshidian} and the book \cite{brigoMercurio}, for example). The main reason behind its great success comes from the fact that LMM considers rate dynamics consistent with the well-established Black-Scholes market formulas for pricing caplets and swaptions. Since then, a lot of work has been devoted to the pricing of interest rate derivatives by using different methods, such as Monte Carlo or PDEs. Particularly, concerning PDEs formulations, the pricing of some complex derivatives with classical LMM has been addressed in \cite{Pietersz, PascucciSuarezVazquez:mmmas, SuarezVazquez:mcm, SuarezVazquez:amc}, among others. Extensions of the LMM framework to incorporate stochastic volatility have been discussed in \cite{MercurioMorini, RebonatoWhite, annurevRebonato,alanBrace} and references therein, with a first formulation in terms of PDEs developed in \cite{LopezVazquez:cmwa}. Thus, in the setting of LMM, the formulation in terms of PDEs for pricing different interest rate derivatives has been widely studied. It is important to notice that the LMM contemplates only forward-looking rates. Therefore, it is no longer valid to price financial products based on the new RFRs, that are backward-looking.

Nowadays, the community working on quantitative finance is very active in proposing new mathematical models able to price the new derivatives based on RFRs. Having in mind that RFRs can be converted into compounded setting-in-arrears term rates, mathematical models for pricing RFR-based derivatives can mainly follow two different approaches. The first strategy is to directly simulate daily the underlying RFRs in their corresponding application periods. The second approach models term rates based on RFRs. The most promising interest rate model following the second strategy is the one proposed by Lyashenko and Mercurio in \cite{lyashenkoMercurio:Mar2019}. This model, referred to as the generalized Forward Market Model (FMM), is the main starting point in this work.

The FMM is a modeling framework that allows for the joint modeling of forward-looking and backward-looking rates inside the same parsimonious setup. Actually, FMM is an extension of the successful LMM. More precisely, in the post-LIBOR world, IBORs have to be replaced with some more general forward rates, and that is exactly what explains the term \textit{generalized forward} market model that has been coined for this new setting.

FMM accommodates both forward-looking and backward-looking rates inside the same framework in a very natural way. Indeed, the FMM incorporates additional advantages over the LMM. One of them comes from the possibility of modeling rates directly under the classic risk-neutral money-market measure, something that was not possible with the LMM. In \cite{lyashenkoMercurio:Nov2019}, Lyashenko and Mercurio explain how to complete the FMM in order to generate rates that are outside the given time grids that are initially assigned to the model. The authors build a general Heath–Jarrow–Morton (HJM) model that originates the FMM. Once this HJM model is constructed, since it is very general, one can create general rates and discount factors. This approach was not possible in the classical LMM. In fact, under LMM the approach is usually to complete the model by adding some
interpolation method, which is typically called rate interpolation.

In order to price RFR-based derivatives under the FMM when the payoff depends on the joint distribution of several interest rates, numerical methods must be considered. For this purpose, expectation-based formulations or PDE-based formulations of the pricing problem can be mainly used. Although the standard approach uses Monte Carlo simulations for expectation-based formulations, it exhibits several drawbacks when pricing interest rate derivatives, as it has been pointed out in \cite{LopezVazquez:cmwa}, for example. Although some disadvantages could be smoothed or even removed by enhanced Monte Carlo techniques applied to specific interest rate derivatives, the pricing of early exercise derivatives based on RFRs, like Bermudan swaptions, would require a highly computational demanding suitable adaption of Monte Carlo methodology. In the PDEs formulation setting, pricing these interest rate derivatives does not involve a huge increase of computational time with respect to analogous derivatives without early exercise opportunity.


These previous arguments motivate the interest in developing suitable PDE formulations for solving the pricing problem in the new recently established FMM model. In this work, we formulate the pricing of RFRs derivatives under the FMM in terms of PDEs. To the best of our knowledge, our presentation is the first in the literature.

Moreover, we present an efficient and recent numerical algorithm to approximate the solution of the proposed PDEs formulation by using finite differences and the AMFR-W1 method for the time integration, as already used in \cite{LopezPerezVazquez:sisc}. This method belongs to the class of AMFR-W-methods \cite{GlezHairerHdezPerez18,GlezHdezPerez21}, which are very efficient when dealing with parabolic problems involving mixed derivatives, as they avoid computing explicitly the part of the Jacobian that includes the discretization of such mixed derivatives. In \cite{LopezPerezVazquez:sisc}, a numerical strategy that combines appropriate finite differences schemes to deal with terms containing mixed derivatives with sparse grids technique has been successfully used for pricing interest rate derivatives when classical LMM for forward rates including stochastic volatility is considered. 

However, in the present work, our aim will be different. More precisely, as the payoff function of the derivative that determines the dynamics of the PDE has differentiability issues near the strike values, we have explored the integration on non-uniform meshes, which contain many more points near the payoff non-differentiability area than in the rest of the domain. As we will see, the consideration of appropriate non-uniform meshes improves the accuracy and reliability of the approximation and we will obtain an approximation of higher quality than that provided by the Monte Carlo method, at least when the ``spatial''\footnote{When working with time-dependent PDEs such as the ones we consider in this work, it is common to use physics-like terminology where the word ``spatial'' refers to variables other than time. In our case, the "spatial" variables are the forward rates, so that the number of these rates is the ``spatial dimension'' of the PDE. In what follows, we will use this terminology.} dimension of the PDE is less than or equal to 4. The application of sparse grids to solve the so-called curse of dimensionality in these PDEs will be left for future work.

The article is organized as follows. In Section \ref{sec:definitions} we review the extended zero-coupon bonds, which are the cornerstone of the recent FMM. A detailed description of how this concept of extended bonds allows to define not only the classical forward-looking forward rates but also the novel backward-looking forward rates, is also carried out. Besides, a thorough illustration of how to compute the extended discount factors from forward rates values is presented. In Section \ref{sec:FMM} the system of stochastic differential equations (SDEs) of the FMM is introduced, considering joint dynamics for the interest rates. In Section \ref{sec:PDEsFMM}, we derive the  PDEs for the FMM. Next, numerical methods to price derivatives under the FMM are designed in Section \ref{sec:numericalMethods}. In Section \ref{sec:numericalExperiments}, numerical experiments are carried out to assess the behavior of the developed numerical methods. Finally, conclusions and future work are discussed in Section \ref{sec:conclusions}.

\section{Main assumptions, definitions and notations} \label{sec:definitions}

In this section, we present the basic notations and definitions that will be used throughout the article. A continuous-time financial market is considered. It has an instantaneous RFR whose value at time $t$ is denoted by $r(t)$. 

\begin{definition}[Bank account]
Let $B(t)$ be the value of the bank account at time $t\geq 0$. $B$ is the classic process that satisfies the ordinary differential equation $\rd B(t) = r(t) B(t)\, \rd t$ with $B(0)=1$, so that $B(t)=e^{\int_0^t r(u) \rd u}$.
\end{definition}

Moreover, we assume the existence of a risk-neutral measure $\Q$, whose associated numeraire is the bank account $B$ we have just defined. Besides, $\E$ will denote the expectation with respect to the risk-neutral measure, and $\mathcal{F}_t$ will be the $\sigma$-algebra generated by risk factors up to the evaluation time.

\begin{definition}[Zero-coupon bond price]
A zero-coupon bond with maturity $T$ is a very simple contract that pays its holder one unit of currency at time $T$, with no intermediate payments. For $t<T$, let $P(t,T)$ be the value at time $t$ of this product. We have the following valuation formula, which is given by risk-neutral pricing:
\begin{equation} \label{eq:zeroCouponBondPrice}
 P(t,T)=\E\left[ e^{-\int_t^T r(u) \rd u }\Big| \mathcal{F}_t\right].
\end{equation}
Note that $P(T,T)=1$ for all $T$.
\end{definition}

The formula \eqref{eq:zeroCouponBondPrice} is clearly defined for valuation times $t$ before the maturity $T$ ($t\leq T$) since one wants to calculate the bond price before it expires. In the new FMM framework, the advantage is that it is mathematically possible to define $P(t,T)$ even for those times $t$ after the maturity $T$ ($t>T$).

\begin{definition}[Extended zero-coupon bond price] \label{def:extendedBondPrice}
For $t>T$, Equation \eqref{eq:zeroCouponBondPrice} reduces to
\begin{equation} \label{eq:extendedZeroCouponBondPrice}
P(t,T)=\E\left[ e^{\int_T^t r(u) \rd u }\Big| \mathcal{F}_t\right] = e^{\int_T^t r(u) \rd u } = \dfrac{B(t)}{B(T)}.
\end{equation}
Note that $P(t,0)=B(t)$.
\end{definition}

One must observe that the integral in equation \eqref{eq:zeroCouponBondPrice} can be also defined for $t>T$, and it is equal to $e^{\int_T^t r(u) \rd u}$, which is measurable with respect to $\mathcal{F}_t$. Then, using properties of conditional expectations, the conditioned expected value is equal to the term inside (see \cite{Mikosh98}, for example). Finally, one just uses the definition of $B(t)$ to obtain that the price of the bond after its maturity is equal to the bank account at the valuation time $t$ divided by the bank account at the maturity $T$.

This concept of extended zero coupon bonds is taken from \cite{lyashenkoMercurio:Mar2019}. Although it was already considered in \cite{AndersenPiterbarg} and \cite{GlassermanZhao00} to define hybrid numeraires and measures, in \cite{lyashenkoMercurio:Mar2019} it is used for the first time to extend the definition from forward-looking to backward-looking rates, so that also forward rates can be appropriately extended after their maturity dates.

\begin{definition}[Extended $T$-forward measure]
 The extended $T$-forward measure, denoted by $\Q^T$, is the martingale measure associated with the extended bond price $P(t,T)$. Note that the risk-neutral measure is a particular case of the extended $T$-forward measure where $T=0$, i.e $\Q = \Q^0$.
\end{definition}


\subsection{The compounded setting-in-arrears term rate}

Financial derivatives written on RFRs consider as underlyings daily compounded setting-in-arrears term rates, which by definition are backward-looking in nature. Hereafter, $N\geq 1$ denotes the number of rates to be modeled. Let us define them in the next paragraphs.

We start with the tenor structure $0=T_0<T_1<\ldots<T_N$. Let $\tau_k$ be the year fraction of the $k$-th time interval $[T_{k-1},T_k)$. Next, we define the backward-looking spot rate.

\begin{definition}[Backward looking spot rate]
The simple backward-looking spot rate is defined as
$$R(T_{k-1},T_k) = \dfrac{1}{\tau_k} \left[ e^{\int_{T_{k-1}}^{T_k} r(u) \rd u} -1 \right] = \dfrac{1}{\tau_k} \left[ \dfrac{B(T_k)}{B(T_{k-1})} -1 \right] = \dfrac{1}{\tau_k} \left[ P(T_k,T_{k-1}) -1 \right].$$
$R(T_{k-1},T_k)$ is the simple interest rate such that the investment of one unit of currency at time $T_{k-1}$ yields $P(T_k,T_{k-1})$ units of currency at time $T_{k}$.
\end{definition}

Additionally, we also need to consider forward-looking rates, which are the same as in the LMM (see \cite{brigoMercurio}).

\begin{definition}[Forward-looking spot rate]
The simple forward-looking spot rate is defined as
$$F(T_{k-1},T_k) = \dfrac{1}{\tau_k} \left[ \dfrac{1}{P(T_{k-1},T_{k})} -1 \right].$$
$F(T_{k-1},T_k)$ is the simple interest rate such that the investment of $P(T_{k-1},T_{k})$ units of currency at time $T_{k-1}$ yields one unit of currency at time $T_{k}$.
\end{definition}

\subsection{Forward rates}

Once we have defined spot rates, we move to the definition of forward rates. 

\begin{definition}[Backward-looking forward rate]
The simple compounded back\-ward-looking forward rate prevailing at time $t$ for the time interval $[T_{k-1},T_k)$ is denoted by $R_k(t)$ and defined by 
\begin{equation} \label{eq:forwardOfBackward}
 R_k(t)=\dfrac{1}{\tau_k} \left(\dfrac{P(t,T_{k-1})}{P(t,T_{k})} -1\right).
\end{equation}
It is the value of the fixed rate $K_R$ in the swaplet paying $\tau_k (R(T_{k-1},T_k) - K_R)$ at time $T_k$, such that this product has zero value at time $t$ (see Figure \ref{fig:FRA_forwardBackward}). Note that the definition \eqref{eq:forwardOfBackward} is valid for all times $t$, even those times $t>T_k$. The rate
$R_k(t)$ satisfies the following properties:
\begin{itemize}
 \item $R_k(T_{k-1})=F(T_{k-1},T_k)$, i.e., at time $T_{k-1}$ it is equal to the forward-looking spot rate.
 \item $R_k(T_k)=R(T_{k-1},T_k)$, i.e., at time $T_k$ it is equal to the backward-looking spot rate.
 \item For $t>T_k$, $R_k(t)=R(T_{k-1},T_k)$, i.e., after time $T_k$ it stops evolving.
\end{itemize}

\end{definition}

\begin{figure}[!h]
\begin{center}
\begin{tikzpicture}
\draw[] (0,0) -- (7,0);
 \draw (1,0) -- (1,3pt);
 \draw (1,0) -- (1,-3pt);
 \draw (1,-0.1) node[below] {$t$};

 \draw (4,0) -- (4,3pt);
 \draw (4,0) -- (4,-3pt);
 \draw (4,-0.1) node[below] {$T_{k-1}$};
 
  \draw (6,0) -- (6,3pt);
 \draw (6,0) -- (6,-3pt);
 \draw (6,-0.1) node[below] {$T_{k}$};
 
 \draw (6,0) node[above] {$\uparrow$};
  \draw (6,0.4) node[above] {$\tau_k(R(T_{k-1},T_{k})-K_R)$};
\end{tikzpicture}
\end{center}
\caption{Swaplet based on the backward-looking rate.}
\label{fig:FRA_forwardBackward}
\end{figure}
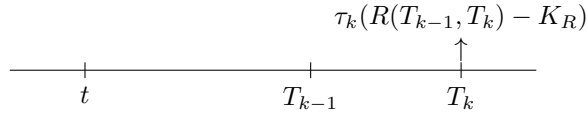

\begin{definition}[Forward-looking forward rate]
The simple compounded forward-looking forward rate prevailing at time $t$ for the time interval $[T_{k-1},T_k)$ is denoted by $F_k(t)$ and defined by 
\begin{equation} \label{eq:forwardOfForward}
             F_k(t) = \left\{ \begin{array}{lcc}
             R_k(t) &   \mbox{if}  & t \leq T_{k-1} \\
             F(T_{k-1},T_k) &  \mbox{if} & t > T_{k-1}.
             \end{array}
   \right.
\end{equation}
It is the value of the fixed rate $K_F$ in the swaplet paying $\tau_k (R(T_{k-1},T_k) - K_F)$ at time $T_k$ such that this product has zero value at time $t$, see Figure \ref{fig:FRA_forwardForward}.
\end{definition}

\begin{figure}[!h]
\begin{center}
\begin{tikzpicture}
\draw[] (0,0) -- (7,0);
 \draw (1,0) -- (1,3pt);
 \draw (1,0) -- (1,-3pt);
 \draw (1,-0.1) node[below] {$t$};

 \draw (4,0) -- (4,3pt);
 \draw (4,0) -- (4,-3pt);
 \draw (4,-0.1) node[below] {$T_{k-1}$};
 
  \draw (6,0) -- (6,3pt);
 \draw (6,0) -- (6,-3pt);
 \draw (6,-0.1) node[below] {$T_{k}$};
 
 \draw (6,0) node[above] {$\uparrow$};
  \draw (6,0.4) node[above] {$\tau_k(F(T_{k-1},T_{k})-K_F)$};
\end{tikzpicture}
\end{center}
\caption{Swaplet based on the forward-looking rate.}
\label{fig:FRA_forwardForward}
\end{figure}
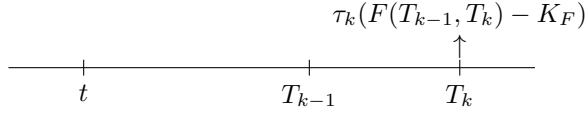

So we have defined two types of forwards: the forward of the backward-looking rate and the forward of the forward-looking rate. Nevertheless, for each $k=1,\ldots,N$, the backward-looking forward rate $R_k$ and the forward-looking forward rate $F_k$ can be modeled by a single rate, the forward of the backward-looking rate $R_k$. In fact, before the beginning of the application interval $[T_{k-1},T_k)$, the two forwards are the same and given by the process $R_k$. At time $T_{k-1}$, $R_k$ sets at the forward-looking spot rate. Note that in the old LMM, $R_k$ will end at time $T_{k-1}$. Instead, in the FMM the rate still exists and continues to evolve. In fact, it evolves until the time $T_k$, where it fixes to the backward-looking spot rate. After time $T_k$, it continues to exist, and it is a constant. 

\subsection{Computation of extended discount factors from forward rates values} \label{sec:compExtendedBondPrices}
In this subsection we summarize how to compute $P(T_i,T_j)$:
\begin{equation}  \label{gammas} 
P(T_i,T_j) = 
     \begin{cases}
        \displaystyle \prod_{k=i+1}^{j} \dfrac{1}{1+\tau_k R_k(T_i)} &\mbox{ if } T_i<T_j,\\
       1 & \mbox{ if }T_i=T_j,\\
       \displaystyle\prod_{k=j+1}^i \big( 1 + \tau_k R_k(T_k) \big) &\mbox{ if } T_i>T_j.
     \end{cases}
\end{equation}
We can abridge these three cases in the following formula $$P(T_i,T_j) = P(T_i,T_0) \prod_{k=1}^j \dfrac{1}{1+\tau_k R_k(T_i)},$$ with the equality holding for $T_i=T_j$, $T_i<T_j$ or $T_i>T_j$, and being valid for $j>0$. The case $j=0$ is just the bank account, i.e., 
$$B(T_i) = P(T_i,T_0)= \prod_{k=1}^i \big( 1 + \tau_k R_k(T_k) \big).$$

\section{The generalized FMM}\label{sec:FMM}

In order to create a proper market model, it is not enough to model just the evolution of a single rate. Indeed, we need to model the evolution of the forward rates jointly. This is the objective of this section: we model the evolution of the forward rates under a common probability measure.

Also in the FMM, we can specify the forward rate dynamics under the classic spot-LIBOR measure $\Q^d$ and the general $T_k$-forward measure $\Q^{T_k}$. In fact, the FMM dynamics under $\Q^d$ and $\Q^{T_k}$ are the same as those of the corresponding LMM (see \cite{brigoMercurio}, for example).

Thanks to the Definition \ref{def:extendedBondPrice} of extended bond prices, the new FMM allows also for forward-rates dynamics under the risk-neutral measure $\Q$. This is a great advantage, which was not possible in the old LMM world. In fact, as previously stated, the bank account is a zero maturity zero coupon bond, i.e. a bond that instantaneously matures and directly transforms into the bank account, $P(t,0)=B(t)$. So, the risk-neutral measure $\Q$, associated with the bank account as numeraire, can be seen as a forward measure, i.e. the forward measure associated with the zero maturity bond. Therefore, one can derive $\Q$ dynamics for the forward rates. 

From now on in this article, we will restrict ourselves to the risk-neutral measure $\Q$.
Each forward rate $R_k$, $k=1,\ldots,N$, is modelled as a continuous time stochastic process $R_k(t)$. The dynamics of the forward processes are driven by a $N$-dimensional correlated Wiener processes $W_1^\Q(t),\ldots,W_N^\Q(t)$ under measure $\Q$. In order to streamline the notation along the article, we let $ W_k(t) = W^\Q_k(t)$, for $k=1,\ldots,N$. Let $\rho_{ij}$ denote the correlation coefficient between $W_i(t)$ and $W_j(t)$, i.e $\E[\rd W_i(t) \rd W_j(t)] = \rho_{ij} \rd t$, where $\rd W_k(t)$ denotes the increment of the Wiener process $W_k(t)$ under the measure $\Q$. The system of SDEs of the FMM takes the form
\begin{equation} \label{eq:SDEs_1_FMM}
 \rd R_k(t) = \mu_k(t) \rd t + \nu_k(t) \rd W_k(t), \quad k=1,\ldots,N,
\end{equation}
where $\mu_k(t)$ and $\nu_k(t)$ are the drift and diffusion terms of the forward rate $R_k(t)$, respectively. The drift terms are determined by requiring a lack of arbitrage. The diffusion terms have to capture the fact that the process $R_k(t)$ will not be killed at $t=T_{k-1}$ as it happened in the classic LMM. In the FMM we need to define dynamics of the forward rates $R_k(t)$ inside their application periods $[T_{k-1},T_k)$. Besides, the volatility of $R_k(t)$ inside $[T_{k-1},T_k)$ goes down progressively to zero: it becomes smaller and smaller until reaching the value zero at $T_k$. In order to model this behavior, the system \eqref{eq:SDEs_1_FMM} is modified in the following way
\begin{equation} \label{eq:SDEs_2_FMM}
 \rd R_k(t) = \mu_k(t) \rd t + \nu_k(t) \gamma_k(t) \rd W_k(t), \quad k=1,\ldots,N,
\end{equation}
where the new parameter $\gamma_k(t)$ incorporates the volatility decay in the model. Therefore, the volatility is decomposed in two components, one is the classic LMM volatility $\nu_k(t)$, while $\gamma_k(t)$ is a deterministic function to control the volatility decay. This function $\gamma_k$ is equal to one up to the beginning of the interval $[T_{k-1},T_k)$ and is equal to zero at $T_k$ to model the fact that the rate has expired. Moreover, $\gamma_k$ should be differentiable and decrease down to zero (going from one to zero) inside the interval $[T_{k-1},T_k)$. For example, in the Ho-Lee model, the volatility decay is linear inside the application period, thus $\gamma_k$ is defined as:
\begin{equation}  \label{gammas} 
\gamma_k(t) = 
     \begin{cases}
       1 & \mbox{ if }t\leq T_{k-1},\\
       \dfrac{T_k-t}{T_k-T_{k-1}} &\mbox{ if } t \in (T_{k-1},T_k), \\
       0 &\mbox{ if } t\geq T_k.
     \end{cases}
\end{equation}

Let $\sigma_k(\cdot)$ be a deterministic function of time $t$. Some standard volatility specifications are the following. In the so-called normal model, $\nu_k(t)=\sigma_k(t)$. For the lognormal model, $\nu_k(t)=\sigma_k(t)R_k(t)$, while for the shifted-lognormal model, which allows for negative rates, $\nu_k(t)=\sigma_k(t)(R_k(t)+\vartheta_k)$, where $\vartheta_k$ is a positive constant. For the CEV model $\nu_k(t)=\sigma_k(t)R_k(t)^{\beta_k}$, where $0\leq \beta_k \leq 1$. In the following, except where otherwise indicated, we will assume a general model specification, i.e. we let $\nu_k$ a general adapted process.

Under the measure $\Q^k$ associated with the numeraire $P(t,T_k)$, $R_k$ is a martingale, i.e. $R_k$ is the driftless process $\rd R_k(t) =  \nu_k(t) \gamma_k(t) \rd W_k^{\Q^k}(t)$, where $W_k^{\Q^k}$ denotes a Wiener process under measure $\Q^k$. In order to model all forward rates $R_k$, $k=1,\ldots,N$ jointly, the computation of the dynamics of each forward rate under a common probability measure is needed. In this work, we will consider as numeraire the bank account $B(t)=P(t,0)$. As previously stated, the probability measure associated with this numeraire is the risk-neutral measure $\Q$. 

Under the probability measure $\Q$ the price of the bonds $P(t,T_k)$ divided by the numeraire $B(t)=P(t,T_0)$ must be martingales. By using this condition, the drifts $\mu_k(t)$ for the forward rates can be computed starting from $R_1$ until $R_N$, thus obtaining (see \cite{lyashenkoMercurio:Mar2019} for details) 
\begin{equation} \label{eq:muk}
\mu_k(t) = \nu_k(t) \gamma_k(t) \sum_{i=1}^{k} \rho_{ik} \dfrac{\tau_i \nu_i(t) \gamma_i(t)}{1+\tau_i R_i(t)}.
\end{equation}
Since $\gamma_k(t)=0$ for $t\geq T_k$, $\mu_k$ can be better expressed in terms of the index function 
\begin{equation} \label{eq:indexFunction}
\eta(t) = \min\{j, T_j\geq t\},    
\end{equation}
which provides the index of the element in the tenor structure being not smaller than $t$ that is the nearest to time $t$. Therefore, we have
\begin{equation} \label{eq:muk_nu}
\mu_k(t) = \nu_k(t) \gamma_k(t) \sum_{i=\eta(t)}^{k} \rho_{ik} \dfrac{\tau_i \nu_i(t) \gamma_i(t)}{1+\tau_i R_i(t)}.
\end{equation}

All in all, the dynamics of $R_k$ under the measure $\Q$ satisfy the following system of SDEs:
\begin{align} \label{eq:dynamicsRkQ}
 \rd R_k(t) = \nu_k(t) \gamma_k(t) \sum_{i=\eta(t)}^{k} \rho_{ik} \dfrac{\tau_i \nu_i(t) \gamma_i(t)}{1+\tau_i R_i(t)} \rd t+ \nu_k(t) \gamma_k(t) \rd W_k(t), \,  k=1,\dots,N.   
\end{align}

\section{PDE for the generalised FMM under the risk neutral measure} \label{sec:PDEsFMM}
After the introduction of the generalized FMM in the previous section, where the dynamics of the forward rates $R_k$ satisfy the system (\ref{eq:dynamicsRkQ}) under the risk-neutral measure $\Q$, in this section we derive the corresponding PDE formulation for the pricing of interest rate derivatives without early exercise opportunity (also referred to as European interest rate derivatives). As in the case of more classical models for interest rate derivatives, such as LMM, or any other derivatives, the statement of the PDEs formulation is based on the appropriate Feynman-K\`ac theorem (see \cite{brigoMercurio}, for example). This theorem relates the formulation in terms of expectations with the one in terms of PDEs, so that the solution of the PDE represents the expectation of an appropriate process under a certain probability measure, in this case, we consider the risk-neutral one.
\begin{proposition}[FMM PDE] \label{prop:FMMPDE}
Let $\eta$ be the index function defined in \eqref{eq:indexFunction}.
Let $R^{min}\in\R$ be a potentially negative lower bound for the rates (to include the relevant cases of shifted-lognormal models). Let $\nu_k(t) = \nu_k(t,R_k(t))$ be a general instantaneous volatility for the forward rate $R_k(t)$.
Under the risk-neutral measure $\Q$, the price of an interest rate derivative with maturity $T=T_k>T_0=0$ (for some $k=1,\ldots,N$), that depends on the fixing of the rates $R_1,\ldots,R_N$, with payoff function $\varphi:[R^{min},\infty)^N \to \R$, is given by $$V(t,R_1,\ldots,R_N)=P(t,T_0) \Pi(t,R_1,\ldots,R_N), \quad t\in [T_0,T],$$
where the relative price $\Pi:[T_0,T]\times [R^{min},\infty)^N \rightarrow \R$ satisfies the PDE
\begin{equation} \label{eq:PDEFMM}
 \dfrac{\partial\Pi}{\partial t} + \sum_{k=1}^N \mu_k(t)\dfrac{\partial\Pi}{\partial R_k} + \frac12 \sum_{k,l=\eta(t)}^N \rho_{kl} \nu_k(t) \gamma_k(t) \nu_l(t) \gamma_l(t) \dfrac{\partial^2 \Pi}{\partial R_k \partial R_l} = 0,   \quad t\in[T_0,T),
\end{equation}
along with the terminal condition
$$\Pi(T,R_1,\ldots,R_N)=\dfrac{\varphi(R_1,\ldots,R_N)}{P(T,T_0)}, \quad R_1,\ldots,R_N\geq R^{min}.$$
\end{proposition}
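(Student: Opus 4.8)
The plan is to obtain \eqref{eq:PDEFMM} from the martingale property of the price expressed in units of the numeraire, combined with the multidimensional It\^o formula applied to the dynamics \eqref{eq:dynamicsRkQ}. This is the standard Feynman--K\`ac route, specialised to the FMM.

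First I would invoke risk-neutral pricing. Since $\Q$ is the measure associated with the numeraire $B(t)=P(t,T_0)$, the arbitrage-free price of the derivative is
$$V(t,R_1,\ldots,R_N)=\E\left[e^{-\int_t^T r(u)\rd u}\,\varphi(R_1(T),\ldots,R_N(T))\,\Big|\,\mathcal{F}_t\right]=B(t)\,\E\left[\frac{\varphi}{B(T)}\,\Big|\,\mathcal{F}_t\right].$$
Dividing by the numeraire and defining the relative price $\Pi(t)=V(t)/P(t,T_0)$, the tower property shows that $\Pi(t)=\E[\varphi/P(T,T_0)\mid\mathcal{F}_t]$ is a $\Q$-martingale, being the conditional expectation of a fixed $\mathcal{F}_T$-measurable random variable. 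Because the coefficients of \eqref{eq:dynamicsRkQ} depend only on $t$ and the current rates, the vector $(R_1,\ldots,R_N)$ is Markov, so $\Pi(t)=\Pi(t,R_1(t),\ldots,R_N(t))$ for a deterministic function $\Pi$. Evaluating at $t=T$ and recalling that $P(T,T_0)=B(T)$ is expressed through the fixed rates by the product formula of Section~\ref{sec:compExtendedBondPrices} (with $R_k(T)=R_k(T_k)$ for the already-expired rates) writes the terminal value purely in terms of the terminal rates and yields the stated terminal condition $\Pi(T,\cdot)=\varphi/P(T,T_0)$.

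Next I would apply the multidimensional It\^o formula to $\Pi(t,R_1(t),\ldots,R_N(t))$, substitute $\rd R_k=\mu_k\rd t+\nu_k\gamma_k\rd W_k$, and use $\rd W_k\rd W_l=\rho_{kl}\rd t$ so that $\rd R_k\rd R_l=\rho_{kl}\nu_k\gamma_k\nu_l\gamma_l\rd t$. Collecting the finite-variation part gives the drift
$$\frac{\partial\Pi}{\partial t}+\sum_{k=1}^N\mu_k\frac{\partial\Pi}{\partial R_k}+\frac12\sum_{k,l=1}^N\rho_{kl}\nu_k\gamma_k\nu_l\gamma_l\frac{\partial^2\Pi}{\partial R_k\partial R_l}.$$
Since $\Pi$ is a $\Q$-martingale, this drift must vanish identically, which is exactly \eqref{eq:PDEFMM}. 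To recover the summation limits I would use $\gamma_k(t)=0$ for $t\geq T_k$, i.e. for $k<\eta(t)$: this kills every second-order term with $k<\eta(t)$ or $l<\eta(t)$, reducing the diffusion sum to $\sum_{k,l=\eta(t)}^N$; the same factor $\gamma_k$ already sits inside $\mu_k$ through \eqref{eq:muk_nu}, so keeping $\sum_{k=1}^N$ in the first-order term is harmless.

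The hard part will be the regularity and integrability justifications concealed in the martingale-to-PDE step, i.e. the hypotheses of the underlying Feynman--K\`ac theorem. It\^o's formula requires $\Pi\in C^{1,2}$, which is delicate precisely because $\varphi$ is only Lipschitz and non-differentiable at the strikes (as noted in the introduction), so in general one may only obtain a solution in a viscosity or weak sense; and concluding that the drift vanishes requires the stochastic-integral part $\sum_k\int\nu_k\gamma_k\,\partial_{R_k}\Pi\,\rd W_k$ to be a genuine martingale rather than a mere local martingale, which needs growth and moment conditions on the general volatilities $\nu_k$ and on $\partial_{R_k}\Pi$. I would state these as standing regularity assumptions rather than establish them from scratch, since they are the technical content of the Feynman--K\`ac framework on which the formal derivation rests.
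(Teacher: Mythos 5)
Your proposal is correct and follows essentially the same Feynman--Kac route as the paper: both identify the relative price as the conditional expectation $\E^\Q[\varphi/P(T,T_0)\mid\mathcal{F}_t]$ under the numeraire $P(t,T_0)=B(t)$, both work with the infinitesimal generator of \eqref{eq:dynamicsRkQ}, and both use $\gamma_k(t)=0$ for $k<\eta(t)$ to restrict the second-order sum to indices $k,l\ge\eta(t)$. The only (immaterial) difference is the logical direction --- the paper verifies that a solution of $\mathcal{L}\Pi=0$ admits the expectation representation, whereas you derive the vanishing of the It\^o drift from the martingale property of the conditional expectation --- and your explicit flagging of the $C^{1,2}$-regularity and true-martingale hypotheses makes precise what the paper leaves implicit in its appeal to the Feynman--Kac theorem.
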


\begin{proof}
The derivative pays out $\varphi(R_1(T),\ldots,R_N(T))$ at the maturity date $T$. Note that receiving an amount $X$ of money at time $T$ is equivalent to receiving an amount $\dfrac{X}{B(T)} = \dfrac{X}{P(T,T_0)}$ of $T_0$-bonds. Therefore, the payoff can be interpreted as a relative payoff, in the sense that it is a payoff in terms of an amount of extended zero-coupon bonds maturing at $T_0$. If $\Pi(t,R_1,\ldots,R_N)$ denotes the time $t$ relative  price of such a derivative, standard pricing theory states that
\begin{align} \label{eq:Pi}
    \Pi(t,R_1,\ldots,R_N) = & \dfrac{V(t,R_1,\ldots,R_N)}{P(t,T_0)} \\
    = & \E^\Q \left[ \dfrac{\varphi(R_1(T),\ldots,R_N(T))}{P(T,T_0)} \Bigg\lvert \mathcal{F}_t\right], \nonumber
\end{align}
which allows to write the relative price of the interest rate derivative in terms of an expectation under the risk-neutral probability measure.

We refer the reader to Section \ref{sec:compExtendedBondPrices} to check how to compute $P(T,T_0)$ in terms of $R_1(T), \ldots, R_N(T)$.
Having in mind that \eqref{eq:dynamicsRkQ} describes the system of SDEs under the risk neutral measure $\Q$, the corresponding infinitesimal generator $\mathcal{L}_{R_1,\ldots,R_N}$ of $R_1,\ldots,R_N$ is thus given by (see \cite{Oksendal}, for example):
\begin{align*}
    \mathcal{L}_{R_1,\ldots,R_N} &=\dfrac{\partial}{\partial t} + \sum_{k=1}^N \mu_k(t)\dfrac{\partial}{\partial R_k} + \frac12 \sum_{k,l=1}^N \rho_{kl} \nu_k(t) \gamma_k(t) \nu_l(t) \gamma_l(t) \dfrac{\partial^2}{\partial R_k \partial R_l}.
\end{align*}
Since $\gamma_k(t)$ and $\gamma_l(t)$ are zero for $k,l<\eta(t)$ the infinitesimal generator can be written as
\begin{align*}
    \mathcal{L}_{R_1,\ldots,R_N} &=\dfrac{\partial}{\partial t} + \sum_{k=1}^N \mu_k(t)\dfrac{\partial}{\partial R_k} + \frac12 \sum_{k,l=\eta(t)}^N \rho_{kl} \nu_k(t) \gamma_k(t) \nu_l(t) \gamma_l(t) \dfrac{\partial^2}{\partial R_k \partial R_l}.
\end{align*}
Applying Feynman-Kac theorem, if $\Pi$ satisfies the PDE
\begin{align} \label{eq:pdeL}
    \mathcal{L}_{R_1,\ldots,R_N} \Pi &= 0,\\
    \Pi(T,\cdot) &= \dfrac{\varphi(\cdot)}{P(T,T_0)},
\end{align}
then $\Pi$ satisfies equation \eqref{eq:Pi}.
\end{proof}

PDE \eqref{eq:PDEFMM} diffuses a relative price, i.e., a price in terms of a bond. After having numerically solved the PDE and thereby having obtained the time $t$ relative value function, the latter has to be multiplied by the time $t$ bond price $P(t,T_0)$ to obtain the absolute value price (the price of the derivative itself). Note that if $t=T_0$, since $P(T_0,T_0)=1$, then $V(T_0,R_1,\ldots,R_N)= \Pi(T_0,R_1,\ldots,R_N)$.

\section{Numerical methods}\label{sec:numericalMethods}

In this section, we design both stochastic and deterministic numerical me\-thods to price interest rate derivatives in the recently introduced FMM setting. Although there is a huge variety of products which could be priced, in this work we will consider swaptions. The main reason is that swaptions markets are one of the two main markets in the world of options on interest rates (along with caps and floors). However, the proposed methodology can be extended to a large variety of interest rate derivatives.

In this section and the forthcoming one about numerical results, for sim\-pli\-ci\-ty we will consider the lognormal model for volatilities, i.e., we assume that $\nu_k(t)=\sigma_k(t)R_k(t)$ (therefore $R^{min}=0$). This model is one of the most popular, and it allows for a more straightforward presentation of the proposed numerical methodologies.

\subsection{RFR swaptions} 

Let us start defining interest rate swaps (IRS). An IRS is a contract with a counterparty to exchange payments indexed to interest rates at future fixed dates.  At each time $T_i$, $i=a+1,\ldots,b$, the contract holder pays a fixed interest rate $K$ and receives the floating interest rate $R(T_{i-1},T_i)=R_i(T_i)$. Therefore, at time $T_a$ the value of the swap is given by
\begin{equation} \label{eq:irs_value}
  \text{IRS}(T_a; T_a,\ldots,T_b)= \sum_{i=a+1}^b P(T_a,T_i) \tau_i (R_i(T_a)-K).  
\end{equation}
As the underlying rates are RFRs, we can refer to it more precisely as an RFR swap.

Let us now define an option over an IRS with maturity at $T_a$ and where the length of the underlying IRS is $(T_b-T_a)$. We denote such a derivate as a swaption $T_a\times(T_b-T_a)$.
A swaption is an option giving the right to enter a swap at the swaption's maturity date $T_a$. More precisely, the RFR swaption payoff at time $T_a$ is given by
$$\max(\text{IRS}(T_a; T_a,\ldots,T_b),0).$$

Since backward-looking and forward-looking forward rates with the same application period are equal at every time before the start of the period, backward-looking and forward-looking swaptions have the same value.

\subsection{Monte Carlo pricing of swaptions with the FMM} \label{monteCarloSwaption}
In the next paragraphs, we describe how to price RFR swaptions under the FMM using Monte Carlo simulation. The results obtained by this method will serve as benchmark prices for the forthcoming PDE numerical solutions.

In terms of expectations, according to expression \eqref{eq:Pi}, the relative price of the swaption $T_a\times(T_b-T_a)$ at time $T_0$ is given by
\begin{equation} \label{exp_swaption}
\E^\Q\left[ \dfrac{\max(\text{IRS}(T_a; T_a,\ldots,T_b),0)}{P(T_a,T_0)} \right].
\end{equation}
Next, taking into account \eqref{eq:irs_value} and the computation of the discount factors in Section \ref{sec:compExtendedBondPrices}, the expectation \eqref{exp_swaption} depends on the joint distribution of the forward rates $R_j$, with $1\leq j \leq b$, at time $T_a$, i.e. $R_1(T_a), \ldots, R_a(T_a), R_{a+1}(T_a),\ldots,$ $R_{b}(T_a)$. According to the dynamics \eqref{eq:SDEs_2_FMM}, we need to generate several simulations of such rates at time $T_a$. Finally, we evaluate the relative payoff in each simulation and average. 

Since the dynamics \eqref{eq:SDEs_2_FMM} does not lead to a process with a known distribution, we perform a time discretization with the small time step $\Delta t$. Moreover, we introduce logarithmic interest rates in the FMM setting defined by \eqref{eq:SDEs_2_FMM}, so that by using Ito's formula we get the following equivalent formulation of the FMM:
\begin{align*}
    \rd \ln R_k(t) = \left(\mu_k(t) \dfrac{1}{R_k(t)} - \frac12 \gamma_k^2(t) \sigma_k^2(t)  \right) \rd t + \sigma_k(t) \gamma_k(t) \rd W_k(t).
\end{align*}
This formulation has the advantage that the diffusion coefficient $\sigma_k \gamma_k(t)$ is deterministic. Therefore,  Euler and Milstein schemes coincide. Consequently, the time discretization
\begin{align*}
    R_k(t+\Delta t) =& R_k(t) \exp\bigg(\mu_k(t) \dfrac{1}{R_k(t)} \Delta t - \frac12 \gamma_k^2(t) \sigma_k^2(t)  \Delta t + \\
    &\hspace{4.0cm}\sigma_k(t) \gamma_k(t) (W_k(t+\Delta t)- W_k(t)) \bigg),
\end{align*}
leads to an approximation of the exact process. Note that the Brownian motion increments are normally distributed with mean $0$ and standard deviation $\sqrt{\Delta t}$, and correlated with correlation matrix $\rho=(\rho_{kl})_{k,l=1,\ldots,N}$.

\newcommand{\ds}{\displaystyle}
\newcommand{\F}{{\cal F}}
\newcommand{\A}{{\cal A}}
\subsection{Numerical integration of the multi-dimensional PDE} \label{sec:amfr-w}
In this section we design an appropriate and efficient numerical scheme for solving the PDE stated in Section \ref{sec:PDEsFMM}, when considering the lognormal model for volatilities. 

Taking into account the notation of the tenor structure for the swap, let us focus on the pricing of the swaption $T_1\times (T_N-T_1)$,
with payoff function
$$\varphi(R_1,\dots,R_N)= \max\left\{ \ds\sum_{k=2}^N P(T_1,T_k) \tau_k(R_k(T_1)-K), \,0\right\},$$
so the relative price $\Pi(t,R_1,\dots,R_N)$ satisfies the PDE \eqref{eq:PDEFMM} for $t\in [T_0,T_1]$ and the final condition
$$\Pi(T_1,R_1,\dots,R_N) = \dfrac{1}{P(T_1,T_0)}  \max\left\{ \ds\sum_{k=2}^N P(T_1,T_k) \tau_k(R_k-K), \,0\right\}.$$
In order to solve numerically this PDE problem, for simplicity we consider the following change of time variable and notations:
\begin{equation}\label{change1}
u(t,x_1,\dots,x_N)= \Pi(T_1-t,R_1,\dots,R_N),\quad t\in [0,T_1-T_0]=[0,\tau_1],\,x_k=R_k\ge 0.    
\end{equation}

Note that we are allowing a certain abuse of notation: we maintain the notation $t$, which initially represented physical time and hereafter represents the remaining time up to $T_1$. In this way the final condition at physical time $T_1$ becomes an initial condition in the PDE formulation in the new time variable. Numerical results in the next section will always refer to the physical time.

After the previously indicated change of time variable and notations, the PDE problem can be expressed as follows, find the function $u$, such that:
\begin{equation}\label{edpdimN}
\begin{array}{rl}
\dfrac{\partial u}{\partial t} =& \ds\sum_{k=1}^N \mu_k(t,x_1,\dots,x_k)\dfrac{\partial u}{\partial x_k}  + \ds\sum_{k=1}^N \delta_k(t,x_k)  \dfrac{\partial^2 u}{\partial x_k^2} \\[0.8pc]
 +& \ds\sum_{k=1}^{N-1} \ds\sum_{l=k+1}^N \varrho_{kl}(t,x_k,x_l) 
\dfrac{\partial^2 u}{\partial x_k\,\partial x_l}  ,
\end{array}
\end{equation}
for all $x_k>0$, where
\begin{equation}\label{coeff_edp}
\begin{array}{l}
 \mu_k(t,x_1,\dots,x_k)= \lambda_k(t) x_k \,\ds\sum_{j=1}^k \rho_{kj} \lambda_j(t) \dfrac{\tau_j}{1+\tau_j x_j} x_j,  \\[0.5pc]
 \delta_k(t,x_k)= \frac12 \lambda_k^2(t) x_k^2,\\[0.5pc]
 \varrho_{kl}(t,x_k,x_l) =\rho_{kl} \lambda_k(t) \lambda_l(t) x_k x_l,\\[0.5pc]
 \lambda_k(t)= \sigma_k(T_1-t) \gamma_k(T_1-t)=\left\{\begin{array}{ll}
 \sigma_1(T_1-t) \dfrac{t}{\tau_1}, & \hbox{ if } k=1 \\[0.5pc]
  \sigma_k(T_1-t), & \hbox{ if } k\ge 2, 
 \end{array}\right. ,\quad t\in [0,\tau_1],
\end{array}
\end{equation}
with initial condition
\begin{equation}\label{ini_N}
\begin{array}{l}
u(0,x_1,\dots,x_N)=u_0(x_1,\dots,x_N):=\max \{g(x_1,\dots,x_N),0\}, \\[0.5pc]
g(x_1,\dots,x_N)=\ds\sum_{k=2}^N \left( \ds\prod_{l=1}^k \dfrac{1}{1+\tau_l x_l} \right) \tau_k (x_k-K).
\end{array}
\end{equation}

Note that the expression of $\lambda_k$ in (\ref{coeff_edp}) is valid for $t\in [0,\tau_1]$.

As usual, in financial problems initially posed in unbounded spatial domains, for the numerical integration of this PDE, the spatial domain must be restricted to a rectangle 
$(x_1,\dots,x_N)\in \Omega=(0,R_1^{max})\times \dots \times (0,R_N^{max})$, by selecting the values $\{R_k^{max}\}_{k=1}^N$  large enough. On the upper boundaries, a linear behavior of the solution is assumed by considering the conditions
\begin{equation}\label{bc_N}
\dfrac{\partial^2 u}{\partial x_k^2}(t,x_1,\dots,x_N) = 0 \quad \hbox{if } x_k=R_k^{max},\quad k=1,\dots,N,
\end{equation}
while, due to the degeneracy of the PDE at the boundaries $x_k=0$, no conditions are required at these boundaries.

Next, we propose a finite difference method to approximate the solution of the PDE problem \eqref{edpdimN}-\eqref{coeff_edp}-\eqref{ini_N}-\eqref{bc_N}. 
Firstly, a discretization of the spatial derivatives $u_{x_k},u_{x_k x_k},u_{x_k x_j}$ must be done. In order to build a spatial mesh, it must be taken into account that the payoff \eqref{ini_N} lacks differentiability at the points of $\Omega$ such that
\begin{equation}\label{areaint}
\tilde{g}(x_2,\dots,x_N):=\ds\sum_{k=2}^N \left( \ds\prod_{l=2}^k \dfrac{1}{1+\tau_l x_l} \right) \tau_k (x_k-K)=0,
\end{equation} 
so, as it is recommended in \cite[Chap. 4]{KarelintHout17} and the references therein, a non-uniform spatial mesh on each $x_k-$direction, for $k\ge 2$, results more convenient. Therefore, given $N$ positive integers $\{M_1,\dots,M_N\}$ we consider the spatial grid
\begin{equation}\label{mesh1}
\hspace{-2em}\begin{array}{rl}
\hbox{for } x_1:&   x_{1,j_1}=j_1 h_1,\quad 0\le j_1\le M_1, \quad h_1=\dfrac{R_1^{max}}{M_1},\\[0.5pc]
\hbox{for } x_k,\,k\ge 2:& x_{k,j_k}=K+L_k \sinh \xi_{k,j_k},\, 0\le j_k\le M_k, \,h_{k,j_k}= x_{k,j_k}-x_{k,j_k-1},  \\[0.5pc]
& \xi_{k,j_k}=\xi_k^{min} + j_k \Delta \xi_k,\quad \Delta \xi_k=\dfrac{\xi_k^{max}-\xi_k^{min}}{M_k}, \\[0.5pc]
 & \xi_k^{min} = \sinh^{-1} (-K/L_k), \quad \xi_k^{max}=\sinh^{-1} \left((R_k^{max}-K)/L_k\right),
\end{array}
\end{equation}
where the parameters $L_k$ measure the fraction of grid points that are closer to $K$. In this case, we will consider $L_k=K/10,\,\forall k\ge 2.$

In addition, a cell averaging technique is applied to smooth the payoff at the grid points close to the hyperplane defined by \eqref{areaint}, by adapting the technique proposed in \cite[Chap. 4]{KarelintHout17} for 1D-PDEs. In this case, for each subset of indices $(j_3,\dots,j_N)$, $0\le j_k\le M_k,\,k\ge 3,$ we compute the value
%
$$\tilde{x}_2:= K-\dfrac{1}{\tau_2} \ds\sum_{k=3}^N \left( \ds\prod_{l=3}^k  \dfrac{1}{1+\tau_l x_{l,j_l}}\right) \,\tau_k(x_{k,j_k}-K),$$
($\tilde{x}_2=K$ for $N=2$)  and look for the index $j_{ind}$, with $0\le j_{ind} \le M_2$, such that
$|x_{2,j_{ind}}-\tilde{x}_2| = \min_{0\le j_2\le M_2} |x_{2,j_2}-\tilde{x}_2|.$ Then, the cell $[x_2^-,x_2^+]$ is considered, where
$$x_2^-=\dfrac{x_{2,j_{ind}-1}+x_{2,j_{ind}}}{2},\quad x_2^+=\dfrac{x_{2,j_{ind}}+x_{2,j_{ind}+1}}{2},\quad \tilde{h}_2=x_2^+ -x_2^- ,
$$
and the initial condition at the points $(x_{1,j_1},x_{2,j_{ind}},x_{3,j_3},\dots,x_{N,j_N})$, for all $j_1=0,\dots,M_1$, is taken as the average over this cell $[x_2^-,x_2^+]$, that is,
$$\begin{array}{l}
u(0,x_{1,j_1},x_{2,j_{ind}},x_{3,j_3},\dots,x_{N,j_N})=\dfrac{1}{\tilde{h}_2} \ds\int_{x_2^-}^{x_2^+} u_0(x_{1,j_1},x_{2},x_{3,j_3},\dots,x_{N,j_N})\,dx_2 \\[0.8pc] 
\hspace{5em} =\dfrac{1}{\tilde{h}_2 (1+\tau_1 x_{1,j_1})} \left(  x_2^+ -\tilde{x}_2  - \left( K+ \dfrac{1-H_3}{\tau_2} \right)  \log \left( \dfrac{1+\tau_2 x_2^+}{1+\tau_2 \tilde{x}_2} \right) \right),
\end{array}$$
with
$$H_3 = \left\{ \begin{array}{ll}
0, & \hbox{ if } N=2, \\
\ds\sum_{k=3}^N \left( \ds\prod_{l=3}^k \dfrac{1}{1+\tau_l x_{l,j_l}} \right) \tau_k (x_{k,j_k}-K)  & \hbox{ if } N\ge 3. 
\end{array}\right. $$
%
%
%
On the other mesh points, the payoff function given in \eqref{ini_N} is applied.

Following the notation given in \cite[Sec. 3]{LopezPerezVazquez:sisc}, the grid points are rearranged as
$$\overline{\Omega}_h=\{ \vec{x}_J=(x_{1,j_1},\dots,x_{N,j_N}) \,:\, J=\vartheta_0(j_1,\dots,j_N),\,0\le j_k\le M_k, \,1\le k\le N\},$$ 
where the bijection $\vartheta_0: \, {\cal I}_N^{(0)} \rightarrow \{0,1,\dots,M-1\}$, with $M=\ds\prod_{k=1}^N (M_k+1)$ is defined as
$$\vartheta_0(x_{1,j_1},\dots,x_{N,j_N})= \ds\sum_{k=1}^N j_k E_k,\quad \mbox{with} \quad E_1=1,\,\,E_k=\ds\prod_{l=1}^{k-1} (M_l+1),\,k\ge 2 .$$
Because of the boundary conditions, the solution of the PDE on all the nodes (including the boundary points) of the spatial grid is unknown. 
Then, by approxi\-ma\-ting the spatial derivatives in the PDE by second-order central finite differen\-ces schemes on all the nodes of the spatial grid, for each $J=0,1,\dots,M-1$, $(j_1,\dots,j_N)=\vartheta_0^{-1}(J)$, we get
\begin{equation}\label{discret1}
\begin{array}{rl}
Y_J'=& \ds\sum_{k=1}^N \mu_k(t,x_{1,j_1},\dots,x_{k,j_k}) \nabla_J^{(k)} 
+ \ds\sum_{k=1}^N \delta_k(t,x_{k,j_k}) \Delta_J^{(k)}  \\[0.5pc]
&+ 
\ds\sum_{k=1}^{N-1} \ds\sum_{l=k+1}^N \varrho_{kl}(t,x_{k,j_k},x_{l,j_l}) \Delta_J^{(kl)}, 
\end{array}
\end{equation}
where the involved discrete operators are 
\begin{equation}\label{discret2}
\begin{array}{l}
\nabla_J^{(1)}=\dfrac{Y_{J+E_1}-Y_{J-E_1}}{2h_1},\quad \nabla_J^{(k)}=\beta_{j_k,-1}^{(k)} Y_{J-E_k}+\beta_{j_k,0}^{(k)} Y_J+ \beta_{j_k,1}^{(k)} Y_{J+E_k},\,\, k\ge 2, \\[0.5pc]
\Delta_J^{(1)}=\dfrac{Y_{J+E_1}-2Y_J+Y_{J-E_1}}{h_1^2} ,\\[0.5pc]
\Delta_J^{(k)}=\eta_{j_k,-1}^{(k)} Y_{J-E_k}+\eta_{j_k,0}^{(k)} Y_J+ \eta_{j_k,1}^{(k)} Y_{J+E_k},\, k\ge 2,\\[0.5pc]
\Delta_J^{(kl)}=\beta_{j_l,-1}^{(l)} \nabla^{(k)}_{J-E_l}+\beta_{j_l,0}^{(l)} \nabla^{(k)}_J+ \beta_{j_l,1}^{(l)} \nabla^{(k)}_{J+E_l},\,  l\ge k+1,
\end{array}
\end{equation}
with
\begin{equation}\label{diff_coef}
\begin{array}{ll}
\beta_{j_k,-1}^{(k)}=\dfrac{-h_{k,j_k+1}}{h_{k,j_k}(h_{k,j_k}+h_{k,j_k+1})}, &
\eta_{j_k,-1}^{(k)}=\dfrac{2}{h_{k,j_k}(h_{k,j_k}+h_{k,j_k+1})}, \\[0.5pc]
\beta_{j_k,0}^{(k)}=\dfrac{h_{k,j_k+1}-h_{k,j_k}}{h_{k,j_k}h_{k,j_k+1}}, &
\eta_{j_k,0}^{(k)}=\dfrac{-2}{h_{k,j_k}h_{k,j_k+1}}, \\[0.5pc]
\beta_{j_k,1}^{(k)}=\dfrac{h_{k,j_k}}{h_{k,j_k+1}(h_{k,j_k}+h_{k,j_k+1})}, &
\eta_{j_k,1}^{(k)}=\dfrac{2}{h_{k,j_k+1}(h_{k,j_k}+h_{k,j_k+1})}.
\end{array}
\end{equation}
It must be observed that, with this procedure, the spatial discretization is expanded to the borders. On the one hand, at the points of the ``right" borders the conditions \eqref{bc_N} have to be involved, so when $j_k=M_k$,  $\Delta_J^{(k)}=0$ is imposed. Therefore, virtual points $Y_{J+E_k}$ must be defined when $j_k=M_k$,
$$Y_{J+E_1}=2Y_J -Y_{J-E_1},\quad  Y_{J+E_k}= -\dfrac{\eta_{j_k,-1}^{(k)}}{\eta_{j_k,1}^{(k)}}  Y_{J-E_k} - 
\dfrac{\eta_{j_k,0}^{(k)}}{\eta_{j_k,1}^{(k)}}  Y_J,\,k\ge 2.
$$
On the other hand, for each $k=1,\dots,N,$ on the boundary $x_k=0$, we get $\mu_k=\delta_k=0,\, \forall k,$ $\varrho_{kl}=0,\,\forall l\ge k+1$. As a consequence, $\nabla_J^{(k)}=\Delta_J^{(k)}=0,\, \forall k,$ $\Delta_J^{(kl)}=0,\,l\ge k+1,$ when $j_k=0$.

By grouping all the approximations in a vector $Y=(Y_J)_{J=0}^{M-1}$, these equations can be written as the initial value problem with a directional splitting
\begin{equation}\label{ivp2d}
\begin{array}{c}
Y'={\cal F}(t,Y)=\ds\sum_{k=0}^N {\cal F}_k(t,Y), \quad Y(0)=Y_0,\\
{\cal F}_k(t,Y)={\cal A}_k(t) Y,\quad k=0,1,\dots,N,\\
{\cal A}_1(t)= \lambda_1^2(t) \tilde{{\cal A}}_1 ,\quad {\cal A}_k(t) =\lambda_k^2(t) \tilde{{\cal A}}_k^{(1)}+\lambda_k(t) {\cal D}_k(t) \tilde{{\cal A}}_k^{(2)},\,\,k=2,\dots,N,\\
\end{array}
\end{equation}
where each $\F_k(t,Y)$ stores the components of the discretization of the advection and diffusion terms in the $x_k-$direction, for $k=1,\dots,N,$ and $\F_0(t,Y)$ stores those of the discretization of the mixed derivatives. In this case, $\tilde{{\cal A}}_1$, $\{\tilde{{\cal A}}_k^{(1)},\tilde{{\cal A}}_k^{(2)}\}_{k=2}^N$ are block tridiagonal constant matrices and ${\cal D}_k(t)$ is diagonal. For the sake of simplicity, the coefficients of these matrices are given in \ref{ap:matrixCoefs}.

Due to the increasing stiffness of \eqref{ivp2d} as the resolution of the spatial grid increases, explicit methods are not suitable for its time integration. On the other hand, fully implicit methods requiring the computation of the exact Jacobian of the derivative function are also unsuitable because of the complicated structure of the matrix ${\cal A}_0(t)$. Therefore, for the time integration of \eqref{ivp2d} a method from the class of AMFR-W-methods (\cite[Sec. 4]{GlezHairerHdezPerez18}, \cite{GlezHdezPerez21}) is applied. In particular, we have selected the one-stage AMFR-W1 method. More precisely, given an approximation $Y_n$ to the solution of \eqref{ivp2d} at the time $t=t_n$, this method approximates the solution at $t=t_{n+1}=t_n+\Delta t$ (with $\Delta t$ being the constant step of the time discretization) by 
\begin{equation}\label{amfrw1}
\begin{array}{rl}
K^{(0)}=& \Delta t \, {\cal F}(t_n,Y_n), \\
(I-\nu \Delta t \,  {\cal A}_k(t_n))K^{(k)}=& K^{(k-1)}+\nu (\Delta t )^2 \alpha_{k,n},\quad k=1,\dots,N, \\
\tilde{K}^{(0)}=& 2K^{(0)}+\theta (\Delta t)^2 G_n- (I-\theta \Delta t \, {\cal A}(t_n))K^{(N)}, \\
(I-\nu \Delta t \, {\cal A}_k(t_n))\tilde{K}^{(k)}=& \tilde{K}^{(k-1)}+\nu (\Delta t )^2 \alpha_{k,n},\quad k=1,\dots,N, \\
Y_{n+1}=& Y_n+ \tilde{K}^{(N)},
\end{array}
\end{equation}
where
$$\begin{array}{l}
{\cal A}(t_n)=\dfrac{\partial {\cal F}}{\partial Y}(t_n,Y_n)=\ds\sum_{k=0}^N {\cal A}_k(t_n),\\[0.8pc]
 \alpha_{k,n}=\dfrac{\partial {\cal F}_k}{\partial t}(t_n,Y_n), \,k=1, \dots, N, \quad G_{n}=\dfrac{\partial {\cal F}}{\partial t}(t_n,Y_n) ,
\end{array}
$$
with parameters $\theta=1/2$ and $\nu=\theta$ for $N=2,3$ and $\nu= \kappa_N N \,\theta$ for $N\ge 4$, where the values of $\kappa_N$ are given in \cite[Table 2]{GlezHairerHdezPerez18} and guarantee that the AMFR-W1 method is unconditionally stable on multi-dimensional linear constant coefficient PDEs with mixed derivatives. 

It must be observed that, due to the block tridiagonal structure of the matrices $\A_k(t),\,k=1,\dots,N$, the linear systems with coefficient matrix $(I- \nu \Delta t \, {\cal A}_k(t_n))$ of dimension $M$ in the method are decoupled in tridiagonal systems of dimension $M_k+1$, which drastically reduces their computational cost (see details in \ref{ap:matrixCoefs}). In addition, it is not necessary to compute the full Jacobian ${\cal A}(t_n)$ in \eqref{amfrw1}, which does not have a block structure that can reduce its computational cost, due to the presence of the discretization of the mixed derivatives. The product ${\cal A}(t_n) K^{(N)}$ in the right-hand side in the definition of $\tilde{K}^{(0)}$ in \eqref{amfrw1} is obtained by an additional evaluation of the derivative function, since
$${\cal A}(t_n) K^{(N)}= \left(\ds\sum_{k=0}^N \A_k(t_n) \right) K^{(N)} =\F(t_n,K^{(N)}).$$

\section{Numerical experiments} \label{sec:numericalExperiments}

In this section, we present some numerical results that correspond to several RFR swaptions to assess the correctness of the proposed numerical methods and the performance of the models. The selected swaptions aim to illustrate the behavior of the methodology when increasing the number of involved RFRs.

Some of the employed market data are shown in Table \ref{tab:marketData}, where constant volatilities are taken into account. Also, we will consider that $T_0=0$ and that the pricing date is the physical time $t=T_0=0$ (which will correspond to the solution of the PDE at time $t=T_1$ in the new time variable with the previously mentioned abuse of notation, although we will always refer to the physical time $t=0$).

\begin{table}[!h]
\begin{center}
\begin{tabular}{ |c|c|c|c| } 
 \hline
  $k$ & $T_k$ & $R_k(0)$ & $\sigma_k(t)$ \\ 
 \hline
 \hline
 $1$ & $0.25$ & $0.01$ & $0.2$ \\ 
 \hline
 $2$ & $0.5$ & $0.013$ & $0.15$ \\ 
 \hline 
 $3$ & $0.75$ & $0.014$ & $0.25$ \\ 
 \hline  
 $4$ & $1.0$ & $0.015$ & $0.26$ \\ 
 \hline   
 $5$ & $1.25$ & $0.016$ & $0.27$ \\ 
 \hline    
\end{tabular}
\caption{Hypothetical market data of RFRs for the numerical examples.}
\label{tab:marketData} 
\end{center}
\end{table}

Moreover, the constant correlation coefficients $\rho_{kl}=0.5$ have been chosen, for all $k,l=1,\ldots,N$, with $k\neq l$. 

In all forthcoming examples, we will start computing prices using Monte Carlo simulation, which will serve as benchmark RFR swaption prices for the corresponding PDE numerical solutions. 

The numerical experiments have been performed with the following hardware and software configurations: an AMD Ryzen 9 5950X 16-Core Processor with 128 GBytes of RAM, CentOS Linux, and GNU C++ compiler. We have not used a numerical linear algebra software package, because the resulting linear systems of equations are tridiagonal, as previously stated. Thus, the codes developing the numerical methods have been implemented from scratch. Besides, parallel computing was not considered, since the AMFR-W1 time integrator is highly sequential.

\subsection{2-dimensional case ($N=2$)}

In this first example, we price, at time $t=T_0=0$, several RFR European swaptions $T_1\times(T_2-T_1)$ for the values of the spot forward rates $R_1(0)$ and $R_2(0)$ given on Table \ref{tab:marketData}. More precisely, we value the swaption at-the-money (ATM) ($K=K_{ATM}$), a couple of swaptions out-of-the-money (OTM) ($K=1.1 K_{ATM}$, $K=1.2 K_{ATM}$), and two swaptions in-the-money (ITM)  ($K=0.8 K_{ATM}$, $K=0.9 K_{ATM}$).

Monte Carlo and PDE results are shown in Table \ref{tab:mcVSpde2d}. Monte Carlo confidence intervals have been obtained with $10^7$ simulations and 100 time steps for the Milstein discretization scheme. In order to check the performance of the PDE AMFR-W1 numerical method, firstly we have tested it on a spatial grid with $M_1=M_2=1024$ and a small constant time step size $\Delta t=\tau_1/2^{11}$, so that the error associated to the time integration is negligible when compared to the one due to the spatial discretization. PDE prices for the values of Table \ref{tab:marketData} are obtained by using multi-linear interpolation since those points could not belong to the spatial mesh \eqref{mesh1}. Note that both Monte Carlo and PDE prices are consistent, thus validating both numerical techniques and also the well-posedness of the PDEs formulation and the associated boundary conditions. Besides, the implied volatilities corresponding to the PDE prices are also shown. Since the chosen dynamics are lognormal, implied volatilities are flat (not perfectly flat since the swap rate is not exactly lognormal).

\begin{table}[!h]
\begin{center}
\begin{tabular}{|c|c|c|c|}
    \hline
    \multicolumn{4}{|c|}{Swaption $T_1\times (T_2-T_1)$} \\
    \hline
    $K$ & Monte Carlo Confidence Interval & PDE & Impl vol \\
    \hline
    $1.2\, K_{ATM}$ & $[6.569174\times 10^{-7},6.705475\times 10^{-7}]$ & $6.610817\times 10^{-7}$ & $0.150103$ \\
    \hline
    $1.1\, K_{ATM}$ & $[1.229203\times 10^{-5},1.235655\times 10^{-5}]$ & $1.230812 \times 10^{-5}$ & $0.150014$ \\
    \hline
    $K_{ATM}$ & $[9.663654\times 10^{-5},9.681989\times 10^{-5}]$ & $9.666517 \times 10^{-5}$ & $0.150003$ \\
    \hline
    $0.9 \, K_{ATM}$ & $[3.313149\times 10^{-4},3.315975\times 10^{-4}]$ & $3.314849 \times 10^{-4}$ & $0.150035$ \\
    \hline
    $0.8 \, K_{ATM}$ & $[6.460959\times 10^{-4},6.463961\times 10^{-4}]$ & $6.463699 \times 10^{-4}$ & $0.150143$ \\
    \hline
    Time  & $73.32$ $s$ & \multicolumn{2}{c|}{$603.82$ $s$, $M_1=M_2=1024$} \\
    \hline    
\end{tabular}
\caption{95\% Monte Carlo confidence intervals for swaption prices, $10^7$ simulations with a $100$ time steps. PDE prices, computed on the non-uniform grids, and corresponding implied volatilities.}
\label{tab:mcVSpde2d}
\end{center}
\end{table}

For the sole purpose of illustrating the behavior of the PDE solution,
in Fig. \ref{fig:solpde} the approximated values of $u$ at $t=T_1$ are reported in the same case with $M_1=M_2=1024$ mesh points. It must be emphasized again that, due to the time reversal \eqref{change1}, the values of $u$ when $t=T_1$ are equivalent to that of $\Pi$ when
$t=T_0=0$. The solution on the computational domain and a zoom in the area of financial interest near the strike rate are shown.

\begin{figure}[h]
\begin{center}
\hspace{-1em}\includegraphics[scale=0.32]{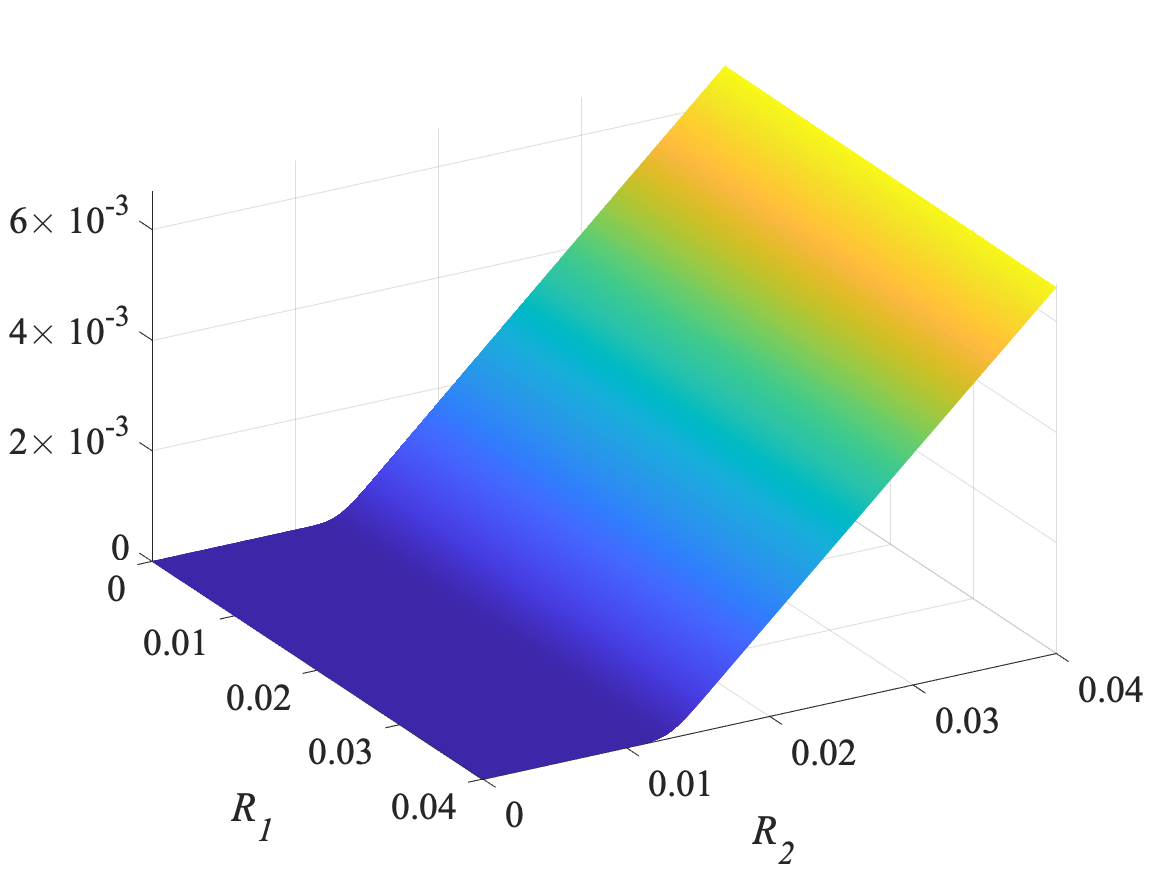}%
\includegraphics[scale=0.32]{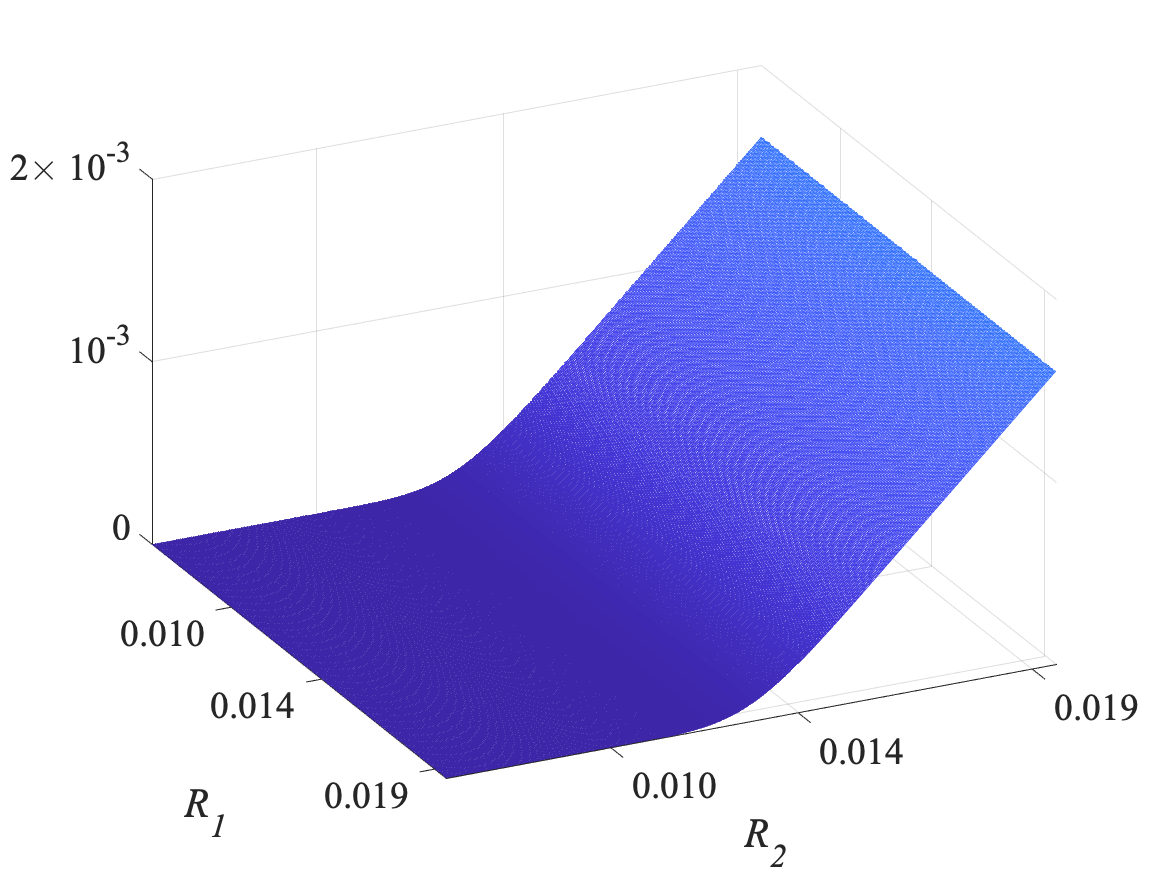}
\end{center}
\caption{PDE prices for the  $T_1\times(T_2-T_1)$ swaption, with $1024\times 1024$ grid. Left: full computational domain. Right: zoom in the area of interest $[0.5K_{ATM},1.5K_{ATM}]^2$.}\label{fig:solpde}
\end{figure}

In order to illustrate the convergence of the finite diffe\-ren\-ce method with AMFR-W1 \eqref{amfrw1}, we have applied it for several spatial meshes \eqref{mesh1}, with $M_1=M_2=L=2^r,\,r=2,3,\dots,10,$ and the same constant time step size $\Delta t=\tau_1/2^{11}$ as above. In Table \ref{tab:spaterr} the errors obtained for all cases are displayed, in both $l_2-$ and $l_{\infty}-$norms. There, the errors ($l_2-$error and $l_{\infty}$-error) have been computed with respect to the numerical solution given by the same method with $M_1=M_2=2^{11}$, and the estimated spatial orders ($l_2-$order and $l_{\infty}$-order) gathered in the last two columns are calculated by the well-known formula $$l_{p}-\mbox{order} = \frac{\log( \parallel l_p-\hbox{error}(2m) \parallel / \parallel l_p-\hbox{error}(m) \parallel)}{\log(2)},$$ for $m=2^r,\,r=2,3,\dots,9$, with $p=2$ and $p= \infty$. The second order of convergence is reflected in Table \ref{tab:spaterr}. 

In addition, in Table \ref{tab:spaterrunif} the results obtained using a uniform spatial mesh $x_{k,j_k}=j_k h_k,\,0\le j_k\le L$, with $h_k=R_k^{max}/L$, are shown. 
Although also achieving second-order convergence with uniform grids, the advantage of using non-uniform meshes is clearly illustrated when comparing Table \ref{tab:spaterr} and Table \ref{tab:spaterrunif}, as it improves the accuracy achieved by the method by two orders of magnitude.

\begin{table}[!h]
\begin{center}
\begin{tabular}{ |c|c|c|c|c|c| } 
 \hline
 $L$ & $l_2-$error & $l_{\infty}-$error & $l_2-$order & $l_{\infty}-$order & Time ($s$) \\ 
 \hline
    4 & $4.53\times 10^{-05}$& $1.02\times 10^{-04}$  &  -       & -  & $1.19 \times 10^{-02}$  \\ 
    8 & $3.76\times 10^{-06}$& $1.04\times 10^{-06}$  & $3.59$ & $3.29$ & $3.53 \times 10^{-02}$  \\ 
   16 & $4.00\times 10^{-07}$& $1.07\times 10^{-06}$ & $3.23$ & $3.28$ & $0.12$  \\ 
   32 & $1.07\times 10^{-07}$& $2.71\times 10^{-07}$ & $1.91$ & $1.98$ & $0.45$  \\ 
   64 & $2.59\times 10^{-08}$& $6.45\times 10^{-08}$ & $2.04$ & $2.07$ & $1.73$  \\ 
  128 & $6.72\times 10^{-09}$& $1.68\times 10^{-08}$ & $1.95$ & $1.94$ & $7.62$  \\ 
  256 & $1.67\times 10^{-09}$& $4.15\times 10^{-09}$ & $2.01$ & $2.02$ & $31.86$ \\ 
  512 & $3.97\times 10^{-10}$& $9.87\times 10^{-10}$ & $2.07$ & $2.07$ & $138.07$ \\ 
 1024 & $7.93\times 10^{-11}$& $1.97\times 10^{-10}$ & $2.32$ & $2.32$ & $603.82$ \\ 
 
 \hline
\end{tabular}
\caption{ATM Swaption $T_1\times(T_2-T_1)$: spatial errors and estimated orders on the non-uniform grids \eqref{mesh1} with $L=M_1=M_2$.}
\label{tab:spaterr} 
\end{center}
\end{table}

\begin{table}[!h]
\begin{center}
\begin{tabular}{ |c|c|c|c|c|c| } 
 \hline
 $L$ & $l_2-$error & $l_{\infty}-$error & $l_2-$order & $l_{\infty}-$order & Time ($s$)\\ 
 \hline
    4 & $4.30 \times 10^{-6}$& $8.33 \times 10^{-6}$ & - &- & $1.21\times 10^{-02}$ \\ 
    8 & $5.53 \times 10^{-6}$& $1.57 \times 10^{-5}$ & $-0.36$ & $-0.91$ & $3.59\times 10^{-02}$ \\ 
   16 & $3.87 \times 10^{-6}$& $1.23 \times 10^{-5}$ & $0.52$ & $0.35$ & $0.12$ \\ 
   32 & $1.13 \times 10^{-6}$& $5.00 \times 10^{-6}$ & $1.77$ & $1.30$ & $0.45$ \\ 
   64 & $3.21 \times 10^{-7}$& $1.95 \times 10^{-6}$ & $1.82$ & $1.36$ & $1.72$ \\ 
  128 & $6.91 \times 10^{-8}$& $3.03 \times 10^{-7}$ & $2.22$ & $2.68$ & $7.61$ \\ 
  256 & $1.98 \times 10^{-8}$& $1.13 \times 10^{-7}$ & $1.80$ & $1.42$ & $31.73$ \\ 
  512 & $4.13 \times 10^{-9}$& $1.88 \times 10^{-8}$ & $2.26$ & $2.59$ & $138.48$ \\ 
 1024 & $1.15 \times 10^{-9}$& $6.79 \times 10^{-9}$ & $1.85$ & $1.47$ & $605.42$ \\ 
 
 \hline
\end{tabular}
\caption{ATM Swaption $T_1\times(T_2-T_1)$: spatial errors and estimated orders on uniform grids with $L=M_1=M_2$.}
\label{tab:spaterrunif} 
\end{center}
\end{table}

\subsection{Higher-dimensional cases ($N=3,4,5$)}
Next, we focus on the pricing of higher dimensional RFR swaptions. More precisely, we consider the swaptions $T_1\times(T_3-T_1)$, $T_1\times(T_4-T_1)$ and $T_1\times(T_5-T_1)$ under the market data shown in Table \ref{tab:marketData}. Given the conclusions from the previous example about the numerical solution of the PDE, we just present the results for non-uniform grids.

PDE prices and the corresponding reference Monte Carlo confidence intervals are shown in Table \ref{tab:mcVSpde}. For each swaption, the number of mesh points in all directions is the same and denoted by $L$. As in the 2-dimensional case, to keep temporal errors negligible compared to spatial errors, for the time integration, the method AMFR-W1 \eqref{amfrw1} has been applied with constant step size $\Delta t=\tau_1/(2L)$. 
 As expected, all PDE prices lie inside the Monte Carlo confidence intervals. Note that as before, the 95\% Monte Carlo confidence intervals were computed by using $10^7$ simulations with 100 time steps in the Milstein time stepping.

\begin{table}[!h]
\begin{center}
\begin{tabular}{|c|c|c|c|}
    \hline
    \multicolumn{4}{|c|}{Swaption $T_1\times (T_3-T_1)$} \\
    \hline
    $K$ & Monte Carlo Confidence Interval & PDE & Impl vol \\
    \hline
    $1.2\, K_{ATM}$ & $[5.007571\times 10^{-6},5.070211\times 10^{-6}]$ & $5.020028\times 10^{-6}$ & $0.178879$ \\
    \hline
    $1.1\, K_{ATM}$ & $[4.532638\times 10^{-5},4.552660\times 10^{-5}]$ & $4.538339\times 10^{-5}$ & $0.177969$ \\
    \hline
    $K_{ATM}$ & $[2.361209\times 10^{-4},2.365753\times 10^{-4}]$ & $2.364758\times 10^{-4}$ & $0.177020$ \\
    \hline
    $0.9 \, K_{ATM}$ & $[7.014066\times 10^{-4},7.020817\times 10^{-4}]$ & $7.014788\times 10^{-4}$ & $0.176040$ \\
    \hline
    $0.8 \, K_{ATM}$ & $[1.340121\times 10^{-3},1.340854\times 10^{-3}]$ & $1.340742\times 10^{-3}$ & $0.175032$ \\
    \hline
    Time  & $112.94$ $s$ & \multicolumn{2}{c|}{$4316.30$ $s$,  $L = 256$} \\ 
    \hline
    \hline
    \multicolumn{4}{|c|}{Swaption $T_1\times (T_4-T_1)$} \\
    \hline
    $K$ & Monte Carlo Confidence Interval & PDE & Impl vol \\
    \hline
    $1.2\, K_{ATM}$ & $[9.480228 \times 10^{-6}, 9.589930\times 10^{-6}]$ & $9.523646\times 10^{-6}$ & $0.184582$ \\
    \hline
    $1.1\, K_{ATM}$ & $[7.775208 \times 10^{-5}, 7.808471\times 10^{-5}]$ & $7.788910\times 10^{-5}$ & $0.183922$ \\
    \hline
    $K_{ATM}$ & $[ 3.794420\times 10^{-4}, 3.801720\times 10^{-4}]$ & $3.800981\times 10^{-4}$ & $0.183272$ \\
    \hline
    $0.9 \, K_{ATM}$ & $[ 1.094727\times 10^{-3}, 1.095804\times 10^{-3}]$ & $1.095566\times 10^{-3}$ & $0.182621$ \\
    \hline
    $0.8 \, K_{ATM}$ & $[ 2.081112\times 10^{-3}, 2.082289\times 10^{-3}]$ & $2.082134\times 10^{-3}$ & $0.181977$ \\
    \hline
    Time  & $150.69$ $s$ & \multicolumn{2}{c|}{$23410.36$ $s$, $L=128$} \\    
    \hline
    \hline
    \multicolumn{4}{|c|}{Swaption $T_1\times (T_5-T_1)$} \\
    \hline
    $K$ & Monte Carlo Confidence Interval & PDE & Impl vol \\
    \hline
    $1.2\, K_{ATM}$ & $[ 1.485427 \times 10^{-5}, 1.501782 \times 10^{-5}]$ & $ 1.500055\times 10^{-5}$ & $0.188628$ \\
    \hline
    $1.1\, K_{ATM}$ & $[ 1.139641 \times 10^{-4}, 1.144421 \times 10^{-4}]$ & $1.143997 \times 10^{-4}$ & $0.187909$ \\
    \hline
    $K_{ATM}$ & $[ 5.350862 \times 10^{-4}, 5.361152 \times 10^{-4}]$ & $5.357548\times 10^{-4}$ & $0.187452$ \\
    \hline
    $0.9 \, K_{ATM}$ & $[ 1.515406 \times 10^{-3}, 1.516917 \times 10^{-3}]$ & $1.516010 \times 10^{-3}$ & $0.187002$ \\
    \hline
    $0.8 \, K_{ATM}$ & $[ 2.869551 \times 10^{-3}, 2.871208 \times 10^{-3}]$ & $ 2.870076\times 10^{-3}$ & $0.186816$ \\
    \hline
    Time  & $196.45$ $s$ & \multicolumn{2}{c|}{$77738.56$ $s$, $L=64$} \\ 
    \hline    
\end{tabular}

\caption{95\% Monte Carlo confidence intervals for swaption prices, $10^7$ simulations with a $100$ time steps. PDE prices and corresponding implied volatilities.}
\label{tab:mcVSpde}
\end{center}
\end{table}

Next, the numerical orders of convergence of the PDE method for the three and four-dimensional spatial examples are shown in Tables \ref{tab:spaterr_3d} and \ref{tab:spaterr_4d}, respectively. In such cases, the time integration has been performed with constant time step size $\Delta t=\tau_1/2^r$ with $r=9$ for $N=3$ and  $r=7$ for $N=4$.  For the five-dimensional case, we did not measure the order of convergence since this test would require a huge amount of RAM, which is not available in our machine.

\begin{table}[!h]
\begin{center}
\begin{tabular}{ |c|c|c|c|c|c| } 
 \hline
 $L$ & $l_2-$error & $l_{\infty}-$error & $l_2-$order & $l_{\infty}-$order & Time $(s)$\\ 
 \hline
8 	&  $1.30 \times 10^{-5}$ & $9.00 \times 10^{-5}$ & -& - & $0.19$\\
16 	&  $3.97 \times 10^{-6}$ & $3.76 \times 10^{-5}$ & $1.72$ & $1.26$ & $1.05$ \\
32 	&  $1.62\times 10^{-6}$ & $1.88\times 10^{-5}$ & $1.29$ & $1.00$ & $7.26$\\
64 	&  $3.15\times 10^{-7}$ & $4.25\times 10^{-6}$ & $2.37$ & $2.15$ & $58.23$ \\
128 &  $6.31\times 10^{-8}$ & $7.62\times 10^{-7}$ & $2.32$ & $2.48$ & $482.55$\\
 256 &  $1.24\times 10^{-8}$ & $1.46 \times 10^{-7}$ & $2.34$ & $2.39$ & $4316.30$\\
 \hline
\end{tabular}
\caption{ATM Swaption $T_1\times(T_3-T_1)$: spatial error and estimated orders on the non-uniform  grids \eqref{mesh1} with $L=M_1=M_2=M_3$.}
\label{tab:spaterr_3d} 
\end{center}
\end{table}

\begin{table}[!h]
\begin{center}
\begin{tabular}{ |c|c|c|c|c|c| } 
 \hline
 $L$ & $l_2-$error & $l_{\infty}-$error & $l_2-$order & $l_{\infty}-$order & Time ($s$) \\ 
 \hline
8 	& $ 6.96 \times 10^{-4}$ & $ 1.83 \times 10^{-2}$ & - & - & $0.59$  \\
16 	&  $ 1.45 \times 10^{-5}$ & $ 3.33\times 10^{-4}$ & $5.58$ & $5.78$ &  $6.62$ \\
32 	&  $ 2.35 \times 10^{-6}$ & $ 6.27 \times 10^{-5}$ & $2.63$ & $2.41$ & $97.18$  \\
64 	&  $ 3.91\times 10^{-7}$ & $ 1.53 \times 10^{-5}$ & $2.59$ & $2.03$ & $1473.89$  \\
 \hline
\end{tabular}
\caption{ATM Swaption $T_1\times(T_4-T_1)$:  spatial error and estimated orders on the non-uniform grids \eqref{mesh1} with $L=M_1=M_2=M_3=M_4$.}
\label{tab:spaterr_4d} 
\end{center}
\end{table}

Then, in Figure \ref{fig:times}, we present a graph of some of the previously shown execution times per problem size for both Monte Carlo and PDEs. As expected, Monte Carlo method does not suffer from the curse of dimensionality.

\begin{figure}[!h]
\begin{center}
\includegraphics[scale=0.5]{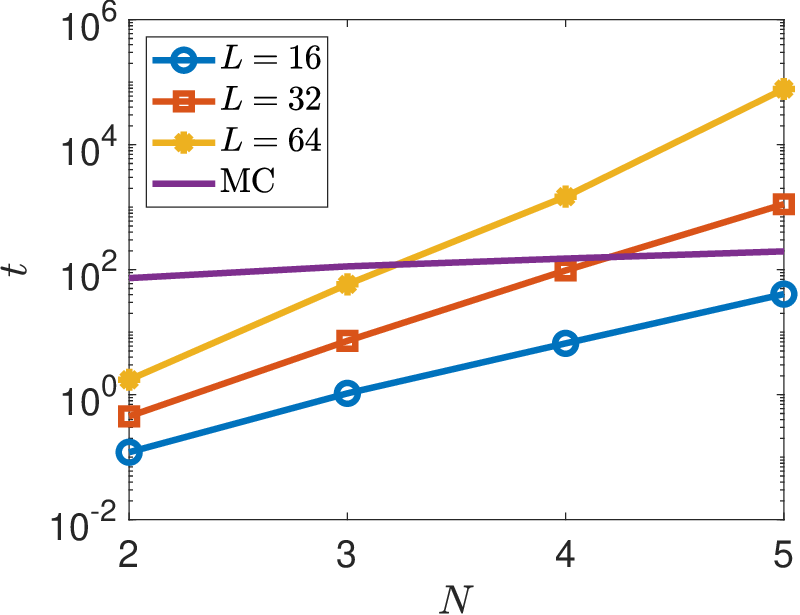}
\end{center}
\caption{Plot of execution times per problem size ($\log_{10}$ scale in the $y$ axis) . For the PDEs, the cases with $L=16,32,64$ are shown. For Monte Carlo method, $10^7$ paths were considered.}\label{fig:times}
\end{figure}

Finally, we show some numerical results for pricing Bermudan swaptions. We consider the case of two callability dates, known as Canary option. More precisely, we focus on the Canary swaption $T_3\times (T_4-T_3)$, where the two early exercises opportunities are $T_2$ and $T_1$, allowing to enter in the swaps $T_2\times(T_4-T_2)$ and $T_1\times(T_4-T_1)$, respectively. The results are shown in Table \ref{tab:canary}, along with the corresponding European swaption prices, which are lower as expected.

\begin{table}[!h]
\begin{center}
\begin{tabular}{ |c|c|c| } 
 \hline
 $K$ & European & Canary \\ 
  \hline
$1.2 K_{ATM}$ & $ 1.075029 \times 10^{-4}$& $ 1.205543 \times 10^{-4}$ \\ 
 \hline
$1.1 K_{ATM}$ & $1.938612 \times 10^{-4}$& $ 2.355759 \times 10^{-4}$ \\ 
 \hline
$K_{ATM}$ & $3.317602 \times 10^{-4}$& $4.483170 \times 10^{-4}$  \\ 
 \hline
$0.9 K_{ATM}$ & $5.339683 \times 10^{-4}$& $8.331122 \times 10^{-4}$  \\ 
 \hline
 $0.8 K_{ATM}$ & $8.032271\times 10^{-4}$& $1.553333 \times 10^{-3}$  \\ 
 \hline
\end{tabular}
\caption{PDE prices for the Swaption $T_3\times(T_4-T_1)$ considering $L=128$. Execution time is $73522.17$ seconds.}
\label{tab:canary} 
\end{center}
\end{table}


\section{Conclusions and future work} \label{sec:conclusions}
After the recent scandal of manipulation of IBORs, the worldwide financial authorities, and regulators started the replacement of IBORs by the so-called RFRs, that rely on real transactions.  In this so-called IBOR transition banks started to offer interest rate derivatives based on RFRs, thus motivating the need for an appropriate modeling RFRs dynamics to price these derivative products. The recent seminal rigorous article \cite{lyashenkoMercurio:Mar2019} introduced the generalized FMM. Pricing with Monte Carlo under FMM is very natural.

Having in view the limitations of Monte Carlo pricing techniques, in the present article we have rigorously stated for the first time in the literature a PDE formulation for pricing RFRs derivatives. Note that the spatial dimension is equal to the number of RFRs involved in the derivative payoff and can become large. In order to solve efficiently the PDE formulation, we propose the use of an AMFR-W1 finite differences method, which is specially appropriate and efficient to cope with the presence of mixed derivatives in the spatial variables. The numerical results illustrate the correctness of the method, by comparing them with a reference solution obtained with converged Monte Carlo simulations. Moreover, order two in space is verified for the different examples. 

In future work, we aim to adapt the previous FMM PDE of Proposition \ref{prop:FMMPDE} to allow the pricing of derivatives with payoffs including past fixings of RFRs. Many interest-rate contracts have payoffs depending not only on 
$R_k(T_k)=R(T_{k-1},T_k)$, but also on the backward-looking rates 
\begin{equation} \label{eq:generalR}
R(t,T) = \frac{1}{T-t} \left( e^{\int_{t}^{T} r(u)\rd u} - 1\right),
\end{equation}
for general $t<T$. Such derivatives with payment times and settings outside the FMM tenor structure $\{T_0,T_1,\ldots,T_N\}$ can not be priced directly with the FMM or its corresponding PDE.
In \cite{lyashenkoMercurio:Nov2019}, Lyashenko and Mercurio completed the FMM by embedding it into a Markovian HJM model with separable Cheyette volatility structure by aligning the HJM and FMM dynamics of the forward rates modeled by the FMM. Under this aligned FMM-Cheyette Markovian HJM model, it is possible to derive the dynamics of the short rate $r$. Besides, this short rate can be simulated by leveraging the realized paths of the simulated generalized FMM rates $R_k(t)$ and their volatilities $\nu_k(t)$. In order to adapt the FMM PDE to be able to price such path-dependent products, one possible approach is to introduce additional path-dependent variables. For example, to price a derivative whose payoff depends on the evolution of $R_k(\tau)$ for all times $\tau \in [T_{k-1},t]$, we can define a path-dependent variable of the general form
\begin{equation}\label{path_dep_proc}
I_k(t)=\int_{T_{k-1}}^{t} \zeta_k(\tau,R_k(\tau)) \rd \tau, \quad T_{k-1}<t<T_k.
\end{equation}
Here, the given function $\zeta_k$ vanishes for the time values outside the specified interval $[T_{k-1},t]$. The specification of the function $\zeta_k$ is part of future work and needs to be consistent with the FMM-HJM model in \cite{lyashenkoMercurio:Nov2019}. From now on, one must compute the SDE for $I_k$, and apply Itô's lemma to the derivative price function $\Pi$ depending on time, forward rates, and the additional state variable $I_k$. Finally, the PDE to price path-dependent RFR derivatives will be obtained by imposing no-arbitrage conditions. This strategy is classical and is explained for Asian options in equity markets in \cite{wilmottDewynneHowison}, for example. Discrete sampling is also possible within this path-dependent framework and can be implemented by forcing $\Pi$ to satisfy the so-called jump conditions.

Another possible extension is the consideration of sparse grid combination techniques for solving PDEs with higher spatial dimensions. The authors have already developed this strategy for pricing interest rate derivatives under the SABR/LIBOR model in \cite{LopezPerezVazquez:sisc}. This approach will also open the door to parallel computing.

\appendix
\section{Matrices in problem \eqref{ivp2d}}  \label{ap:matrixCoefs}
In this Appendix, we detail the coefficients of the matrices of the semi-discretized initial value problem \eqref{ivp2d} and how the linear systems involved in the one-stage AMFR-W1-method \eqref{amfrw1} are solved. 

In order to simplify this explanation, we denote $I^{(k)}$ as the identity matrix of dimension $M_k+1,\,k=1,\dots,N$. By using the tensor product notation ($A\otimes B=(a_{ij}B)$), we define the matrices
$$\begin{array}{l}
\tilde{{\cal A}}_1 = I^{(N)} \otimes I^{(N-1)} \otimes \dots \otimes I^{(2)} \otimes A_1,      \\
\tilde{{\cal A}}_k^{(l)} = I^{(N)} \otimes \dots \otimes I^{(k+1)} \otimes A_k^{(l)} \otimes I^{(k-1)}\otimes \dots \otimes I^{(1)},\,\,l=1,2,\,k=2,\dots,N.
\end{array}$$
The matrix $A_1=((A_1)_{ij})_{i,j=0}^M$ is a tridiagonal matrix of dimension $M_1+1$. The entries of its diagonals are
$$\begin{array}{l}
(A_1)_{j_1,j_1-1}=\left\{ \begin{array}{ll}
  \left( - \dfrac{1}{h_1} \dfrac{\tau_1}{1+\tau_1 x_{1,j_1}} +\dfrac{1}{h_1^2} \right)\, \dfrac{x_{1,j_1}^2}{2} , & \hbox{if } 1\le j_1\le M_1-1 ,\\[0.6pc]
   -\dfrac{1}{h_1}\dfrac{\tau_1}{1+\tau_1 x_{1,j_1}}  \, x_{1,j_1}^2 , & \hbox{if }  j_1= M_1, 
\end{array}
\right.      \\[1.2pc]
(A_1)_{j_1,j_1}=\left\{ \begin{array}{ll}
   - \dfrac{1}{h_1^2} \, x_{1,j_1}^2 , & \hbox{if } 0\le j_1\le M_1-1, \\[0.6pc]
   \dfrac{1}{h_1}\dfrac{\tau_1}{1+\tau_1 x_{1,j_1}}  \, x_{1,j_1}^2 , & \hbox{if }  j_1= M_1, 
\end{array}
\right.      \\[1.2pc]
(A_1)_{j_1,j_1+1}=
  \left(  \dfrac{1}{h_1} \dfrac{\tau_1}{1+\tau_1 x_{1,j_1}} +\dfrac{1}{h_1^2} \right)\, \dfrac{x_{1,j_1}^2}{2} , \quad  \hbox{if } 0\le j_1\le M_1-1.
       
\end{array}$$
For each $k=2,\dots,N,$ both matrices $A_k^{(1)}$ and $A_k^{(2)}$ are tridiagonal matrices of dimension $M_k+1$. Their respective diagonals' elements are
$$\begin{array}{ll}
(A_k^{(1)})_{j_k,j_k-1}=
  \left( \dfrac{\tau_k}{1+\tau_k x_{k,j_k}} \beta_{j_k,-1}^{(k)} +\dfrac{1}{2} \eta_{j_k,-1}^{(k)}
  \right)\, x_{k,j_k}^2 , & \hbox{if } 1\le j_k\le M_k ,     \\[1.2pc]
(A_k^{(1)})_{j_k,j_k}= \quad
  \left( \dfrac{\tau_k}{1+\tau_k x_{k,j_k}} \beta_{j_k,0}^{(k)} +\dfrac{1}{2} \eta_{j_k,0}^{(k)}
  \right)\, x_{k,j_k}^2 , & \hbox{if } 0\le j_k\le M_k  ,    \\[1.2pc]
(A_k^{(1)})_{j_k,j_k+1}=
  \left( \dfrac{\tau_k}{1+\tau_k x_{k,j_k}} \beta_{j_k,1}^{(k)} +\dfrac{1}{2} \eta_{j_k,1}^{(k)}
  \right)\, x_{k,j_k}^2 , & \hbox{if } 0\le j_k\le M_k -1   , 
\end{array}$$

$$\begin{array}{ll}
(A_k^{(2)})_{j_k,j_k-1}=
   \beta_{j_k,-1}^{(k)} \, x_{k,j_k} , & \hbox{if } 1\le j_k\le M_k   ,   \\[1.2pc]
(A_k^{(2)})_{j_k,j_k}= \beta_{j_k,0}^{(k)} \, x_{k,j_k} , & \hbox{if } 0\le j_k\le M_k  ,    \\[1.2pc]
(A_k^{(2)})_{j_k,j_k+1}=
  \beta_{j_k,1}^{(k)} \, x_{k,j_k} , & \hbox{if } 0\le j_k\le M_k -1    , \\[1.2pc]
\end{array}$$
where the finite difference coefficients $\beta_{j_k,\cdot}^{(k)}$ and $\eta_{j_k,\cdot}^{(k)}$ are given in \eqref{diff_coef}. 
Moreover, in order to compute the Jacobian ${\cal A}_k$ for $k=2,\dots,N$, the diagonal matrix ${\cal D}_k(t)=\hbox{diag}( (d_k(t))_J )_{J=0}^{M-1}$ of dimension $M$ is needed. Its entries are
\begin{equation}\label{dks}
    (d_k(t))_J=\ds\sum_{l=1}^{k-1}  \lambda_l(t)\, \rho_{kl} \,\dfrac{\tau_l}{1+\tau_l x_{l,j_l}} \, x_{l,j_l},\quad \hbox{with } (j_1,\dots,j_N)=\vartheta^{-1}(J).
\end{equation} 

Note that the first row of all of these matrices is null, since $x_{k,0}=0$, for all $k=1,\dots,N$.

As a consequence, the linear systems of the form 
$$(I-\nu\tau {\cal A}_1(t_n))K=R$$ of dimension $M$ in the AMFR-W1-method \eqref{amfrw1} are decoupled into $\prod_{k=2}^N (M_k+1)$ systems of dimension $M_1+1$ with coefficient matrix 
$(I-\nu\tau \lambda_1^2(t_n) A_1)$. 

\noindent Moreover, since the coefficients of the diagonal matrix ${\cal D}_k(t)$ defined in \eqref{dks} only depend on the indices $j_1,\dots,j_{k-1}$, the linear systems  in \eqref{amfrw1} of the form 
$$(I-\nu\tau {\cal A}_k(t_n))K=R, \quad \mbox{for } k=2,\dots,N,$$  
 can be decoupled into $\prod_{l\ne k}(M_l+1)$ linear systems of dimension $M_k+1$. 

\noindent More precisely, for each $k=2,\dots,N$, for each multi-index $(\dots,j_{k-1},j_{k+1},\dots)$ of $(N-1)$ integers with $0\le j_r\le M_r,\, r\ne k,$ the code computes
$$d_k(t_n)=\ds\sum_{l=1}^{k-1}  \lambda_l(t_n)\, \rho_{kl} \,\dfrac{\tau_l}{1+\tau_l x_{l,j_l}} \, x_{l,j_l}, $$
and solves a linear system of dimension $M_k+1$ with coefficient matrix $$(I-\nu \tau (\lambda_k^2(t_n)A_k^{(1)}+\lambda_k(t_n) d_k(t_n)A_k^{(2)})).$$

\section*{Acknowledgements}
S. Pérez-Rodríguez has been partially supported by the Spanish Ministry of Science and Innovation through project PID2022-141385NB-I00. J.G. López-Salas and C. Vázquez acknowledge the funding by the Spanish Mi\-nis\-try of Science and Innovation through the project PID2019-10858RB-I00 and PID2022-141058OB-I00, as well as from the Galician Government through grants ED431C 2022/47, both including FEDER financial support. Also J.G. López-Salas and C. Vázquez acknowledge the support received from the Centro de Investigaci\'on en Tecnolog\'{\i}as de la Informaci\'on y las Comunicaciones de Galicia, CITIC, funded by Xunta de Galicia and the European Union (European Regional Development Fund, Galicia 2014-2020 Program) through the grant ED431G 2019/01.

\bibliography{mybibfile}

\end{document}